\newtheorem{Theorem}{Theorem}
\newtheorem{Lemma}{Lemma}
\def\refa#1{(\ref{#1})}
\def\ra{\rightarrow}
\def\C{\mathcal{C}}
\def\N{\mathbb{N}}
\def\R{\mathbb{R}}
\def\RTX{\R_+^{1+3}}
\def\Linf{{L^\infty}}
\def\Linfx#1{{L^\infty_{#1}}}
\def\Linftx#1{{L^\infty_{1,#1}}}
\def\Linftau#1#2{{L^\infty_{#1,#2}}}
\def\LinfX{{L^\infty(\R_+)}}
\def\LinfTX{{L^\infty(\R_+\times\R_+)}}
\def\<{\langle}
\def\>{\rangle}
\def\norm#1{\<#1\>}
\def\nnorm#1{(1+|#1|)} 
\def\w#1{\widetilde{#1}}
\def\d{\partial}
\def\eps{\varepsilon}
\def\la{\lambda}
\def\Ocal#1{\mathcal{O}\left(#1\right)}
\begin{document}

\title{Linear and nonlinear tails II: \\ exact decay rates in spherical symmetry}

\author{Nikodem Szpak}
\affiliation{Max-Planck-Institut f{\"u}r Gravitationsphysik, Albert-Einstein-Institut, Potsdam,
Germany}
\author{Piotr Bizo\'n}
\affiliation{M. Smoluchowski Institute of Physics, Jagiellonian University, Krak\'ow, Poland}
\author{Tadeusz Chmaj}
\affiliation{H. Niewodniczanski Institute of Nuclear
   Physics, Polish Academy of Sciences,  Krak\'ow, Poland}
   \affiliation{Cracow University of Technology, Krak\'ow,
    Poland}
\author{Andrzej Rostworowski}
\affiliation{M. Smoluchowski Institute of Physics, Jagiellonian University, Krak\'ow, Poland}

\date{\today}

\begin{abstract}
We derive  the exact late-time  asymptotics for small spherically symmetric solutions of
nonlinear wave equations with a potential.
 The dominant tail is shown to result from the competition between  linear and nonlinear effects.
\end{abstract}


\maketitle

\section{Introduction}

We consider linear and nonlinear wave equations with a potential term
\begin{equation} \label{wave-eq}
   \Box u + \la V u = F(u)\,, \qquad \Box=\partial_t^2-\Delta\,,
\end{equation}
in three spatial dimensions for spherically symmetric  initial data
\begin{equation} \label{init-data}
  u(0,r)=f(r),\qquad \d_t u(0,r)=g(r),\qquad r:=|x|
\end{equation}
with $f,g$ of compact support. The spherical symmetry of the initial data is preserved in
evolution so $u=u(t,r)$. We are interested in the asymptotic behaviour of $u(t,r)$ for late times
$t\gg r$.

Our approach is based on the perturbative calculation which has been developed by the last three
authors in concrete physical applications \cite{bcr1, bcr2, bcr3} and recently put on the
rigorous ground by the first author in \cite{NS-Tails} (below referred to as part I).
In part I the convergence of the perturbation scheme was proved in a weighted space-time
$\Linf$-norm which provided pointwise estimates on the solution $u(t,r)$ in the whole spacetime.
Moreover, upper bounds on the errors (remainders of the perturbation series) for every
perturbation order were obtained. Here, we are going to combine the qualitative global
weighted-$\Linf$ estimates with the quantitative perturbation scheme in order to obtain precise
late-time asymptotics of  solutions. To this end, we first solve the linear perturbation
equations analytically up to the second (nontrivial) order (in spherical symmetry this can be
done explicitly) and show that our decay estimate is optimal. Then, we prove that the sum of all
higher-order perturbations does not modify the dominant asymptotics, hence the second order
perturbation gives the precise approximation of the tail of the solution $u$.  Along the way, we
illustrate our analytical results with numerical solutions of the initial value problem
(\ref{wave-eq}-\ref{init-data}).

The basis of our analysis is given by the theorem of Strauss and Tsutaya \cite{Strauss-T}, recently generalized by NS \cite{NS-WaveDecay}, which states that
\begin{equation}
  |u(t,x)| \leq \frac{C}{(1+t+|x|)(1+|t-|x||)^{q-1}} \qquad \forall (t,x)\in\RTX
\end{equation}
with $q:=\min(m-1,k,p-1)$ provided that the potential $V$ and the initial data $f,g$ satisfy
pointwise bounds
\begin{equation} 
  |V(x)| \leq \frac{V_0}{\nnorm{x}^k},\quad k>2\,,
\end{equation}
\begin{equation} \label{fg-bound0}
  |f(x)| \leq \frac{f_0}{\nnorm{x}^{m-1}}, \qquad
  |\nabla f(x)| \leq \frac{f_1}{\nnorm{x}^m}, \qquad
  |g(x)| \leq \frac{g_0}{\nnorm{x}^m},\qquad m>3\,,
\end{equation}
with small $V_0, f_0, f_1, g_0$ and the analytic nonlinearity satisfying for $p>1+\sqrt{2}$
\begin{equation}
  |F(u)|\leq F_1 |u|^p,\qquad |F(u)-F(v)|\leq F_2 |u-v| \max(|u|,|v|)^{p-1}\qquad \text{for } |u|, |v|<1.
\end{equation}
This is true for classical solutions \cite{Strauss-T}, i.e. for
$(f,g)\in\C^3\times\C^2$, $V\in\C^2$ and $F\in\C^2$, leading to
$u\in\C^2$ and remains true also for weak solutions, e.g. $u\in\C^0$, \cite{NS-WaveDecay}.

Here, for simplicity, we consider initial data of compact support so the decay rate $q$ is
determined solely by the spatial decay rate of the potential $k$ and the leading power of the
nonlinearity $p$. Generalization of these results to the initial data with  the fall-off
(\ref{fg-bound0}) is straightforward.

The paper is organized as follows. We first study the purely linear situation with the potential
term only. Then, we repeat the calculations for the purely nonlinear case without the potential.
Finally, we combine both results in the general case \refa{wave-eq}.

\subsection*{Notation}


We use the symbol $\<x\>:=1+|x|$ to denote the spatial weighted-$L^\infty$ norm
\begin{equation}
  \| f \|_\Linfx{m} := \| \norm{r}^m f(r) \|_\LinfX
\end{equation}
We also define a space-time weighted-$L^\infty$ norm
\begin{equation}
  \| u \|_\Linftau{s}{q} := \| \norm{t+r}^s \norm{t-r}^{q-s}  u(t,r)\|_\LinfTX\,.
\end{equation}
We will frequently use the fact that the  finiteness of  $\| u \|_\Linftau{1}{q}$ guarantees the
decay of $u$ like $1/t$ on the lightcone $t\sim r$ and like $1/t^q$ for fixed $r$ as well as
$1/r^q$ for fixed $t$. Note that functions with compact support in $\R_+$ belong to all spaces
$\Linfx{m}$ with any $m>0$, what we will denote by $\Linfx{\infty}$. Analogously
$\Linftx{\infty}$ will stand for functions that belong to $\Linftx{q}$ for any $q$.

We introduce the following notation for solutions of the wave equations. Let $I_V$ be a linear
map from the space of initial data to the space of solutions of the wave equation
\refa{wave-eq}-\refa{init-data} with $F(u)=0$, so that $u=I_V(f,g)$. For the wave equation with a
source term and zero initial data
\begin{equation}
  \Box u + Vu= F,\qquad u(0,r)=0,\qquad  \d_t u(0,r)=0,
\end{equation}
we denote the solution by $u=L_V(F)$, where $L_V$ is a linear map from the space of source
functions to the space of solutions to the above problem. Note that, due to linearity, the
solution $u$ of a wave equation with source $F$ and nonzero initial data $f,g$ is the sum of
these two contributions
\begin{equation}
  u=L_V(F)+I_V(f,g).
\end{equation}
Observe that if we put the potential term on the r.h.s. we obtain
\begin{equation}
  \Box u = -V u + F\,,
\end{equation}
which, treated as a wave equation without potential (on the l.h.s.), is formally solved by
\begin{equation}
  u=-L_0(Vu)+L_0(F)+I_0(f,g).
\end{equation}
Here the solution $u$ appears on both sides what seems to make the formula useless, but it will
allow us to formulate various iteration schemes, e.g.
\begin{equation}
  u_{n+1}=-L_0(Vu_n)+L_0(F(u_n))+I_0(f,g)
\end{equation}
for which we will study convergence in suitable $\Linftx{q}$ norms.

Finally, we define constants which arise from estimates proved in \cite{NS-WaveDecay} and
improved in \cite{NS-DecayLemma}
\begin{equation}
  C_m:= \max \left(\frac{9}{2(m-2)}, 5\right),
\end{equation}
\begin{equation}
  C_{p,q}:=2+\frac{8}{p-1}+\frac{2}{q-1}.
\end{equation}
The latter will be referred to as a bound on the allowed strength  of the potential. We wish to
emphasize that this bound, although not optimal, is not arbitrarily small but finite, which is
crucial in applications (like, for instance, the Regge-Wheeler equation describing waves
propagating on Schwarzschild geometry).

We recall some standard definitions of asymptotic calculus. The notation $f(t)=\Ocal{g(t)}$ for
$t\ra\infty$ means that there exist constants $C,T>0$ such that
\begin{equation}
  |f(t)| \leq C |g(t)|
\end{equation}
for all $t>T$. The notation $f(t)=o(h(t))$ for $t\ra\infty$ means that
\begin{equation}
  \lim_{t\ra\infty} \frac{f(t)}{h(t)} = 0.
\end{equation}
We will also use the symbol $f(t)\cong g(t)$ for an asymptotic approximation, as a shorthand to
$f(t)=g(t)[1+o(1)]$ as $t\ra\infty$.
In case when we write $f(t)\cong c t^{-q}$ and the constant $c$ may become zero, this notation should be read as $f(t)= c t^{-q} + o(t^{-q})$.

\section{Linear case with potential}
First, we consider the linear wave equation
\begin{equation} \label{V:wave-eq}
   \Box u + \la V(r) u = 0
\end{equation}
with initial data \refa{init-data}, where $f(r)$ and $g(r)$ are supported on the interval
$r\in[0,R]$. We assume that $V(r)\cong V_0/r^k$ for $r\gg 1$ and $\la>0$ is a small parameter,
bounded by some finite constant $C_V>0$ (which will be defined later). Moreover, we assume that
the potential $V$ and the initial data $f,\nabla f,g$ are (at least) continuous and satisfy
\begin{equation} \label{V-bound}
  \|V\|_\Linfx{k} = 1
\end{equation}
and
\begin{equation} \label{fg-bound}
  f_0:=\|f\|_\Linfx{k}, \qquad
  f_1:=\|\nabla f\|_\Linfx{k+1}, \qquad
  g_0:=\|g\|_\Linfx{k+1}
\end{equation}
with $f_0, f_1, g_0<\infty$ for some $k>2$.

\subsection{Perturbation series}

We define the perturbation series
\begin{equation}\label{l_sum}
   u = \sum_{n=0}^\infty \la^n v_n\,.
\end{equation}
Inserting (\ref{l_sum}) into equation \refa{V:wave-eq} we get the following perturbation scheme
\begin{alignat}{4} \label{pert-V-0}
  \Box v_0 &= 0,&\qquad (v_0,\dot{v}_0)(0)&=(f,g)&\qquad&\ra&\qquad v_0&=I_0(f,g) \\ \label{pert-V-n}
  \Box v_{n+1} &= -V v_n,&\qquad (v_{n+1},\dot{v}_{n+1})(0)&=(0,0)&\qquad&\ra&\qquad v_{n+1} &= -L_0(V
  v_n)\,.
\end{alignat}
Due to linearity of \refa{V:wave-eq} it turns out that the partial sums
\begin{equation}
   u_n:=\sum_{k=0}^n \la^k v_k, \qquad n\geq 0
\end{equation}
satisfy the following iteration scheme
\begin{equation}
   u_{-1} := 0\\
\end{equation}
\begin{equation}
   u_n:= I_0(f,g) - \la L_0(V u_{n-1}),\qquad n\geq 0.
\end{equation}
Then, from part I, we have the following
\begin{Theorem} \label{Th:V}
For $f,g$ and $V$ as above and any $k>2$, the sequence $u_n$ converges (in norm) in $\Linftx{k}$
provided that $\la<C_{k,k}^{-1}$. The limit $u:=\lim_{n\ra\infty} u_n$ satisfies
\begin{equation}
  |u(t,r)| \leq \frac{C}{\norm{t+r}\norm{t-r}^{k-1}},\qquad \forall (t,r)
\end{equation}
where a positive constant $C$ depends only on $f_0, f_1, g_0, \la$ and $k$.
\end{Theorem}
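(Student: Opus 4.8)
The plan is to read the iteration $u_n = I_0(f,g) - \la L_0(V u_{n-1})$, $u_{-1}=0$, as the Picard iteration for the affine map
\begin{equation}
  \Phi(w) := I_0(f,g) - \la\, L_0(V w)
\end{equation}
on the Banach space $X:=\Linftx{k}$, and to apply the Banach fixed point theorem. Two linear estimates are imported from part I. First, the free-evolution estimate applied to data obeying \refa{fg-bound} gives $\|I_0(f,g)\|_{\Linftx{k}}\le C'(f_0+f_1+g_0)$ with $C'$ depending only on $k$ (compact support makes this even stronger, $I_0(f,g)\in\Linftx{\infty}$). Second, the decay lemma of part I for $L_0$, combined with the pointwise bound on $V$, yields the product estimate
\begin{equation}
  \|L_0(V w)\|_{\Linftx{k}} \le C_{k,k}\,\|V\|_{\Linfx{k}}\,\|w\|_{\Linftx{k}} = C_{k,k}\,\|w\|_{\Linftx{k}},
\end{equation}
where the last equality uses the normalization \refa{V-bound}. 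The one delicate point here is the weight bookkeeping: the factor $\norm{r}^{-k}$ carried by $V$ is exactly what $L_0$ needs in order to map the source $Vw$ back into $\Linftx{k}$ with the definite constant $C_{k,k}$ (rather than a worse or norm-dependent one), and this is what pins the threshold on $\la$ at $C_{k,k}^{-1}$.

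Granting these, $\Phi$ maps $X$ into itself and, since its linear part is $-\la L_0(V\,\cdot\,)$,
\begin{equation}
  \|\Phi(w_1)-\Phi(w_2)\|_{\Linftx{k}} = \la\,\|L_0\!\left(V(w_1-w_2)\right)\|_{\Linftx{k}} \le \la\, C_{k,k}\,\|w_1-w_2\|_{\Linftx{k}}.
\end{equation}
For $\la<C_{k,k}^{-1}$ the contraction constant $\theta:=\la C_{k,k}$ is strictly less than $1$, so $\Phi$ is a contraction on $X$. By the Banach fixed point theorem it has a unique fixed point $u\in\Linftx{k}$, the Picard sequence $u_n=\Phi(u_{n-1})$ started from $u_{-1}=0$ converges to $u$ in the $\Linftx{k}$-norm geometrically, and passing to the limit in the iteration relation (using continuity of $I_0$ and $L_0$) identifies $u$ with the solution of \refa{V:wave-eq}--\refa{init-data}.

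For the explicit bound, taking $w_1=u$, $w_2=0$ and using $\Phi(u)=u$, $\Phi(0)=I_0(f,g)$ gives $\|u\|_{\Linftx{k}}\le\|I_0(f,g)\|_{\Linftx{k}}+\theta\|u\|_{\Linftx{k}}$, hence
\begin{equation}
  \|u\|_{\Linftx{k}} \le \frac{\|I_0(f,g)\|_{\Linftx{k}}}{1-\la C_{k,k}} \le \frac{C'(f_0+f_1+g_0)}{1-\la C_{k,k}} =: C,
\end{equation}
so $C$ depends only on $f_0,f_1,g_0,\la$ and $k$. Finally, by the very definition of the $\Linftx{k}=\Linftau{1}{k}$ norm, $\norm{t+r}\norm{t-r}^{k-1}|u(t,r)|\le\|u\|_{\Linftx{k}}\le C$ for all $(t,r)$, which is the asserted estimate. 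The only genuinely hard ingredient is the decay lemma behind the product estimate above; since that is already established in part I, within this paper the proof reduces to the elementary contraction argument just outlined.
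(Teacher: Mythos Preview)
Your proposal is correct and mirrors the argument that the paper imports from part~I: the paper does not reprove Theorem~\ref{Th:V} here but quotes from its proof the telescoping estimate $\|u_n-u_{n-1}\|_{\Linftx{k}}\le (C_{k,k}\la)^n\|I_0(f,g)\|_{\Linftx{k}}$ and the remainder bound $\|u-u_n\|_{\Linftx{k}}\le (C_{k,k}\la)^{n+1}(1-C_{k,k}\la)^{-1}\|I_0(f,g)\|_{\Linftx{k}}$, which are precisely the outputs of your contraction argument with $\theta=\la C_{k,k}$. The two key inputs you cite---the free-evolution bound $\|I_0(f,g)\|_{\Linftx{k}}\le C_{k+1}(f_0+f_1+g_0)$ and the decay lemma giving $\|L_0(Vw)\|_{\Linftx{k}}\le C_{k,k}\|V\|_{\Linfx{k}}\|w\|_{\Linftx{k}}$---are exactly those used in part~I, so your reconstruction is faithful.
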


From the proof of theorem \ref{Th:V} it follows that
\begin{equation}
  \| v_n\|_\Linftx{k} = \frac{\|u_{n}-u_{n-1}\|_\Linftx{k}}{\la^n} \leq (C_{k,k})^{n} \|I_0(f,g)\|_\Linftx{k},
\end{equation}
hence $v_n \in \Linftx{k}$ for all $n\geq 0$. At the lowest order $v_0=u_0$ we have an
arbitrarily fast decay estimate,  $v_0\in\Linftx{\infty}$,  as follows from Huygens' principle.
All higher-order terms $v_n (n=1,2,...)$ contain  contributions from the backscattering off the
potential and are only in $\Linftx{k}$. Since $u\in\Linftx{k}$, we expect that all $u_n$ starting
from $u_1\in\Linftx{k}$ predict qualitatively correct asymptotic behaviour of $u$.

\subsection{Optimal decay estimate}

\begin{Theorem} \label{Th:lin-v1}
Under the above assumptions, for $t\gg r+R$, we have
\begin{equation}
  v_1(t,r) \cong c_1 t^{-k}\,,
\end{equation}
where the constant $c_1$ is given by (\ref{eq:c1}).
\end{Theorem}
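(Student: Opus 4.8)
Here is the plan.

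The plan is to compute $v_1=-L_0(Vv_0)$ from the explicit spherically-symmetric representation formulas and read off the leading term as $t\to\infty$. Setting $w_0:=rv_0$ reduces the free equation to the one-dimensional wave equation on $r\ge0$ with a Dirichlet condition at $r=0$, so d'Alembert gives
\begin{equation}
  w_0(t,r)=\tfrac12\big[F(r+t)+F(r-t)\big]+\tfrac12\int_{r-t}^{r+t}G(\sigma)\,d\sigma,
\end{equation}
with $F(r):=rf(r)$, $G(r):=rg(r)$ extended as odd functions supported in $[-R,R]$. Oddness of $G$ gives $\int_{-R}^{R}G=0$, hence the strong Huygens property $v_0(t,r)=0$ for $t>r+R$; moreover, as soon as $2s+\xi>R$ one has the $s$-independent profile
\begin{equation}
  w_0(s,s+\xi)=h(\xi):=\tfrac12F(\xi)+\tfrac12\int_\xi^{\infty}G(\sigma)\,d\sigma,
\end{equation}
supported in $|\xi|\le R$ — that is, along the outgoing cone $v_0$ carries the fixed radiation profile $h(t-r)/r$. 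Similarly, $w_1:=rv_1$ solves the one-dimensional inhomogeneous wave equation with source $-\rho V(\rho)v_0$, zero data, and Dirichlet condition at the origin; since $\rho V(\rho)v_0$ is already odd in $\rho$, Duhamel's formula in one dimension gives
\begin{equation}
  v_1(t,r)=-\frac1{2r}\int_0^t ds\int_{r-t+s}^{\,r+t-s}\rho\,V(\rho)\,v_0(s,\rho)\,d\rho .
\end{equation}

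The key step is a cancellation forced by this oddness. For $s<t-r$ the interval $[r-t+s,\,r+t-s]$ equals $[-(t-r-s),\,t+r-s]$ and thus contains the interval $[-(t-r-s),\,t-r-s]$ symmetric about $\rho=0$, over which the odd integrand integrates to zero; only the outer slab $[\,t-r-s,\,t+r-s\,]$ survives. Combined with the support of $v_0$ this localizes the contribution to $s\in[(t-r-R)/2,\,(t+r+R)/2]$ (for $t\gg r+R$ this range lies inside $[0,t-r]$, and the slice $t-r\le s\le t$ contributes nothing), the region where the outgoing shell $\{|\rho-s|\le R\}$ meets that slab; there $\rho\approx s\approx t/2$, the near-origin part of $v_0$ never enters, and on the shell $\rho V(\rho)v_0(s,\rho)=V(\rho)\,w_0(s,\rho)=V(\rho)\,h(\rho-s)$ exactly.

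To finish, put $\xi=\rho-s$ and exchange the order of integration, doing the $s$-integral first: the constraints collapse to $s\in[(t-r-\xi)/2,\,(t+r-\xi)/2]$, an interval of length exactly $r$ which, for $|\xi|\le R$ and $t\gg r+R$, lies entirely inside the $s$-range $[(t-r-R)/2,(t+r+R)/2]$. Hence
\begin{equation}
  v_1(t,r)=-\frac1{2r}\int_{-R}^{R}h(\xi)\left(\int_{(t-r-\xi)/2}^{(t+r-\xi)/2}V(s+\xi)\,ds\right)d\xi .
\end{equation}
Using $V(\rho)=V_0\rho^{-k}(1+o(1))$ uniformly for $\rho\gtrsim t/2$, the inner integral equals $V_0\,r\,(t/2)^{-k}\,(1+o(1))$ independently of $\xi$, the two factors of $r$ cancel, and
\begin{equation}
  v_1(t,r)=-\,2^{k-1}V_0\,t^{-k}\int_{-R}^{R}h(\xi)\,d\xi \;+\; o(t^{-k}).
\end{equation}
Finally $\int_{-R}^{R}F=0$ by oddness, while Fubini and oddness of $G$ give $\int_{-R}^{R}\big(\int_\xi^{\infty}G(\sigma)\,d\sigma\big)\,d\xi=2\int_0^{R}r^2 g(r)\,dr$, so $\int_{-R}^{R}h=\int_0^{\infty}r^2 g(r)\,dr$ and
\begin{equation}
  c_1=-\,2^{k-1}V_0\int_0^{\infty}r^2 g(r)\,dr ,
\end{equation}
which one recognizes as the constant $c_1$ in (\ref{eq:c1}). (At this order $c_1$ sees only $g$ and may vanish, consistently with the convention for $\cong$ fixed above.)

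The step I expect to be the main obstacle is the honest bookkeeping of the contributing $s$-region together with a uniform ``freezing'' of $V$. One must check that the edge pieces — where only part of the outgoing shell lies inside the slab — are completely absorbed by the exchange-of-integration step (this is precisely what makes the final coefficient clean and, somewhat surprisingly, $r$-independent), and that replacing $V(\rho)$ by $V_0\rho^{-k}$ over the shell, and then the resulting $s$-integral by its leading value, costs only $o(t^{-k})$; the latter follows from $\rho^kV(\rho)\to V_0$ together with the fixed ($R$-wide) thickness of the shell, which bound the error by $\tfrac1r\cdot\Ocal{r}\cdot\sup_{\rho\gtrsim t/2}|\rho^kV(\rho)-V_0|\cdot(t/2)^{-k}=o(t^{-k})$.
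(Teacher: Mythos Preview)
Your argument is correct and is essentially the paper's proof in different clothing: both reduce the Duhamel integral for $v_1$ to the double integral $-\tfrac{1}{2r}\int_{-R}^{R}h\,\big(\int V\,ds\big)\,d\xi$ over the outgoing radiation shell and then extract the $t^{-k}$ coefficient from $V\sim V_0\rho^{-k}$; the paper reaches this integral directly in null coordinates via Lemma~\ref{lemma:sol} and formula~(\ref{duh1}) (your $\xi=\rho-s$ is the paper's $-\eta$, and your $s$-integral is the paper's $\xi$-integral), and then invokes Lemma~\ref{lemma:integral} where you freeze $V$ by hand. Your additional evaluation $\int_{-R}^{R}h=\int_0^\infty r^2 g(r)\,dr$, showing that $c_1$ depends only on $g$, is a pleasant bonus the paper does not make explicit.
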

\begin{proof}
For $v_0$ satisfying \refa{pert-V-0} from lemma \ref{lemma:sol} we have
\begin{equation} \label{eq:v0}
u_0(t,r) = v_0(t,r) = \frac{h(t-r)-h(t+r)}{r}\,,
\end{equation}
where $h$ is given by \refa{lemma:sol:h}. To solve equation \refa{pert-V-n} we use the Duhamel
representation for the solution of the inhomogeneous equation $\Box v = N(t,r)$ with zero initial
data
\begin{equation} \label{a2}
v(t,r) = \frac{1}{2 r} \int\limits_{0}^{t} d\tau \int\limits_{|t-r-\tau|}^{t+r-\tau} \rho
N(\tau,\rho) d \rho.
\end{equation}
This formula can be easily obtained by integrating out the angular variables in the standard
formula $\phi=G^{ret} * N$ where $G^{ret}(t,x) = (2 \pi)^{-1} \Theta(t) \delta (t^2-|x|^2)$ is
the retarded Green's function of the wave operator in $3+1$ dimensions. It is convenient to
express (\ref{a2}) in terms of null coordinates $\xi = \tau + \rho$ and $\eta = \tau - \rho$
\begin{equation} \label{duh1}
v(t,r) = \frac{1}{4 r} \int\limits_{|t-r|}^{t+r} d\xi \, \int\limits_{-\xi}^{t-r} d \eta \,
\frac{(\xi-\eta)}{2} \w{N}(\xi, \eta),
\end{equation}
where $\w{N}(\xi,\eta):=N\left(\frac{\xi+\eta}{2},\frac{\xi-\eta}{2}\right)=N(\tau,\rho)$. Using
this representation we get from \refa{pert-V-n}
\begin{equation} \label{eq:v1(1)}
v_1(t,r) = - \frac {1} {4 r} \int\limits_{|t-r|}^{t+r} d\xi \, \int \limits_{-\xi}^{t-r} d \eta
\, \frac{(\xi-\eta)}{2} V(\rho(\xi,\eta)) \w{v}_0(\xi,\eta).
\end{equation}
Since the initial data $f,g$ are supported on $[0,R]$, the function $h(x)$ is supported on
$[-R,R]$. Then, for $t>r+R$ equation (\ref{eq:v1(1)}) simplifies to
\begin{equation}
  v_1(t,r) = - \frac {1} {4 r} \int \limits_{-R}^{+R} d \eta \, h(\eta)
  \int\limits_{t-r}^{t+r} d\xi \, V(\rho(\xi,\eta)).
\end{equation}
Next, we write the potential in the form $V(r)=r^{-k}[V_0+w(r)]$ with $w(r)\ra 0$ as $r\ra\infty$
at any rate (i.e. $w(r)=o(1)$ for $r\gg 1$). Then
\begin{equation} \label{eq:v1(2)}
  v_1(t,r) = - \frac {1} {4 r} \int \limits_{-R}^{+R} d \eta \, h(\eta)
  \int\limits_{t-r}^{t+r} d\xi \, \frac{2^k}{(\xi-\eta)^{k}} [V_0+w(\xi-\eta)]
  \equiv \widehat{v}_1(t,r) + \w{v}_1(t,r)\,,
\end{equation}
where
\begin{align}
  \widehat{v}_1(t,r) &= - \frac{2^{k-2}}{r} V_0 \int \limits_{-R}^{+R} d \eta \, h(\eta)
  \int\limits_{t-r}^{t+r} d\xi \, (\xi-\eta)^{-k} \\
  \w{v}_1(t,r) &= - \frac{2^{k-2}}{r} \int \limits_{-R}^{+R} d \eta \, h(\eta)
  \int\limits_{t-r}^{t+r} d\xi \, (\xi-\eta)^{-k} w(\xi-\eta).
\end{align}
Using lemma \ref{lemma:integral} for $t\gg r+R$ and $k>2$ we get
\begin{equation} \label{eq:v1(3)}
\widehat{v}_1(t,r) = \frac{c_1}{t^k} + \Ocal{\frac{r+R}{t^{k+1}}},
\end{equation}
with
\begin{equation} \label{eq:c1}
c_1 = - 2^{k-1} V_0 \int \limits_{-R}^{+R} h(\eta)\,d\eta.
\end{equation}
Using lemma~2 again, we get an estimate for $\w{v}_1$
\begin{equation}\label{est_v1_tilde}
\begin{split}
  |\w{v}_1(t,r)| &\leq  \frac{2^{k-2}}{r} \int \limits_{-R}^{+R} d \eta \, |h(\eta)|
  \sup_{t-r\leq \zeta\leq t+r} |w(\zeta-\eta)| \int\limits_{t-r}^{t+r} d\xi \, (\xi-\eta)^{-k}  \\
  &\leq  \sup_{t-r-R\leq \zeta\leq t+r+R} |w(\zeta)|\; \frac{2^{k-2}}{r}
  \int \limits_{-R}^{+R} d \eta \, |h(\eta)| \int\limits_{t-r}^{t+r} d\xi \, (\xi-\eta)^{-k}  \\
  &= \sup_{t-r-R\leq \zeta\leq t+r+R} |w(\zeta)|\; \left[
    \frac{\w{c_1}}{t^k} + \Ocal{\frac{r+R}{t^{k+1}}}\right],
\end{split}
\end{equation}
where
\begin{equation}
\w{c}_1 = - 2^{k-1} \int \limits_{-R}^{+R} |h(\eta)|\,d\eta.
\end{equation}
Note that the prefactor in (\ref{est_v1_tilde}) vanishes asymptotically for large $t$
\begin{equation}
  \lim_{t\ra\infty} \sup_{t-r-R\leq \zeta\leq t+r+R} |w(\zeta)| = 0,
\end{equation}
hence, for $t\gg r+R$ we have
\begin{equation}
  v_1(t,r) = \frac{c_1}{t^k} + \Ocal{\frac{r+R}{t^{k+1}}} + o(1)\cdot\left[
    \frac{\w{c}_1}{t^k} + \Ocal{\frac{r+R}{t^{k+1}}}\right]
  = \frac{c_1}{t^k} + o\left(\frac{1}{t^k}\right).
\end{equation}
If the potential behaves like $V(r) = V_0 r^{-k} + W(r)$ with $W(r) = \Ocal{r^{-k-1}}$ for $r\gg 1$,
it follows from (\ref{eq:v1(3)}) that
\begin{equation}
\begin{split}
  v_1(t,r) &= \frac{c_1}{t^k} + \Ocal{\frac{r+R}{t^{k+1}}}
  + \Ocal{\frac{\w{c}_1}{t^{k+1}}} + \Ocal{\frac{r+R}{t^{k+2}}}\\
  &= \frac{c_1}{t^k} + \Ocal{\frac{1+r+R}{t^{k+1}}},
\end{split}
\end{equation}
which gives the more detailed information about the sub-leading term.
\end{proof}

\begin{Theorem} \label{Th:lin-u}
Under the assumptions of theorem \ref{Th:lin-v1}, for $t\gg r+R$, we have
\begin{equation}
  u(t,r)\cong \la v_1(t,r) [1+\Ocal{\la}]\,,
\end{equation}
hence
\begin{equation}
  u(t,r) \cong C t^{-k},\qquad C=\la c_1 + \Ocal{\la^2}\,.
\end{equation}
\end{Theorem}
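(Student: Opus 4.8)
The plan is to show that the full solution $u$ is, up to $\Ocal{\la^2}$ corrections, equal to $\la v_1$, and then invoke Theorem~\ref{Th:lin-v1} to read off the $t^{-k}$ asymptotics. Recall that $u = \sum_{n=0}^\infty \la^n v_n$ with $v_0 = I_0(f,g)$, so that by Huygens' principle $v_0 \in \Linftx{\infty}$ and hence $v_0(t,r) = o(t^{-k})$ for $t \gg r+R$ (in fact it vanishes identically there). Therefore $u(t,r) = \la v_1(t,r) + \sum_{n=2}^\infty \la^n v_n(t,r)$, and the whole task reduces to controlling the tail $\sum_{n\ge 2} \la^n v_n$ and showing it is $\Ocal{\la^2} \cdot \Ocal{t^{-k}}$, i.e. subdominant by a factor of $\la$ relative to $\la v_1 \cong \la c_1 t^{-k}$.

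First I would use the norm estimate quoted right after Theorem~\ref{Th:V}, namely $\|v_n\|_\Linftx{k} \le (C_{k,k})^n \|I_0(f,g)\|_\Linftx{k}$, which holds for every $n \ge 0$. Since we are assuming $\la < C_{k,k}^{-1}$, the series $\sum_{n\ge 2} \la^n \|v_n\|_\Linftx{k}$ converges and is bounded by $\la^2 (C_{k,k})^2 \|I_0(f,g)\|_\Linftx{k} \sum_{j\ge 0} (\la C_{k,k})^j = \Ocal{\la^2}$, with the implied constant depending only on $f_0, f_1, g_0, k$ and the fixed gap between $\la$ and $C_{k,k}^{-1}$. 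Being bounded in $\Linftx{k}$ means precisely that $\big|\sum_{n\ge 2}\la^n v_n(t,r)\big| \le \Ocal{\la^2}\, \norm{t+r}^{-1}\norm{t-r}^{-(k-1)}$, which for $t \gg r+R$ behaves like $\Ocal{\la^2}\, t^{-k}$. Combining, $u(t,r) = \la v_1(t,r) + \Ocal{\la^2 t^{-k}}$, and since $\la v_1(t,r) \cong \la c_1 t^{-k}$ with $c_1$ a fixed nonzero constant, we may rewrite the remainder as $\la v_1(t,r)\cdot \Ocal{\la}$, giving $u(t,r) \cong \la v_1(t,r)[1 + \Ocal{\la}]$ and then $u(t,r) \cong C t^{-k}$ with $C = \la c_1 + \Ocal{\la^2}$.

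The one delicate point is the interplay of the two limits $t \to \infty$ and the tail summation: the bound $\big|\sum_{n\ge 2}\la^n v_n\big| \lesssim \la^2 t^{-k}$ is uniform in $t$ (it comes from a single $\Linftx{k}$ norm bound, valid for all $(t,r)$), so there is no issue exchanging the sum with the large-$t$ asymptotics — the $\Ocal{\la^2}$ constant does not depend on $t$. One should also note that the asymptotic statement $\la v_1 \cong \la c_1 t^{-k}$ degrades gracefully if $c_1 = 0$: per the convention fixed in the Notation section, $\la v_1(t,r) = \la c_1 t^{-k} + o(\la t^{-k})$, and then $u = \la c_1 t^{-k} + o(\la t^{-k}) + \Ocal{\la^2 t^{-k}}$, so $C = \la c_1 + \Ocal{\la^2}$ still holds with the understanding that the leading coefficient may itself vanish. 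I expect the main obstacle — such as it is — to be bookkeeping: making sure the constant in the geometric tail estimate is genuinely independent of $t$ and $r$ and only involves the data norms and $k$, which is immediate from the form of the estimate $\|v_n\|_\Linftx{k} \le (C_{k,k})^n\|I_0(f,g)\|_\Linftx{k}$ established in the proof of Theorem~\ref{Th:V}.
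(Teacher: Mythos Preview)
Your proposal is correct and follows essentially the same route as the paper's proof: both bound the tail $\sum_{n\ge 2}\la^n v_n$ in $\Linftx{k}$ by the geometric-series estimate $(C_{k,k}\la)^2/(1-C_{k,k}\la)\cdot\|I_0(f,g)\|_\Linftx{k}$ coming from the proof of Theorem~\ref{Th:V}, kill $v_0$ by Huygens' principle, and then invoke Theorem~\ref{Th:lin-v1}. The only step the paper makes more explicit is the Bernoulli-type inequality converting $\norm{t+r}^{-1}\norm{t-r}^{-(k-1)}\le 2(1+t)^{-k}$ for $t\ge 2(k-1)r$, which you subsume under ``behaves like $\Ocal{t^{-k}}$ for $t\gg r+R$.''
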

\begin{proof}
Knowing that the perturbation series converges for some $\la$ we can bound the error in the
$n$-th order relative to the exact solution by estimating the sum of all higher order terms. For
the convergent sequence $u_n$ we get from the proof of Theorem \ref{Th:V} that
\begin{equation}
  \|u-u_n\|_\Linftx{k} \leq \frac{(C_{k,k}\la)^{n+1}}{1-C_{k,k}\la} \|I_0(f,g)\|_\Linftx{k},
\end{equation}
what provides the pointwise bound on the error
\begin{equation} \label{pointwise-error-bound}
  |u(t,r)-u_n(t,r)| \leq \frac{(C_{k,k}\la)^{n+1}}{1-C_{k,k}\la}\cdot
  \frac{C_{k+1}\cdot(f_0+f_1+g_0)}{\norm{t+r}\norm{t-r}^{k-1}} \qquad \forall t,r\geq 0.
\end{equation}
For $n=1$ with $u_1=v_0+\la v_1$ we have
\begin{equation}
  |u(t,r)-v_0(t,r)-\la v_1(t,r)| \leq \frac{(C_{k,k}\la)^{2}}{1-C_{k,k}\la}\cdot
  \frac{C_{k+1}\cdot(f_0+f_1+g_0)}{\norm{t+r}\norm{t-r}^{k-1}}=:\Delta_1(t,r).
\end{equation}
A simple inequality (which follows immediately from Bernoulli's inequality)
\begin{equation} \label{ineq-Bernoulli}
  \frac{1}{(1-\zeta)^\sigma} \leq \frac{1}{1-\sigma\zeta} = 1+\frac{\sigma\zeta}{1-\sigma\zeta} \leq 2,\qquad
  \forall \zeta \leq 1/(2\sigma),\quad \sigma>\,,
\end{equation}
implies that
\begin{equation}
  \frac{1}{\norm{t-r}^q} = \frac{1}{(1+t)^q \left(1-\frac{r}{1+t}\right)^q} \leq \frac{2}{(1+t)^q}
\end{equation}
for $\zeta:=r/(1+t)\leq 1/(2q)$, hence it holds for all $t\geq 2q r$. The error term can then be
estimated by
\begin{equation}
  \Delta_1(t,x)
  \leq 2\,(C_{k,k} \la)^2 \frac{2\, C_{k+1}\cdot(f_0+f_1+g_0)}{(1+t)^k}   \Ocal{\frac{\la^2}{t^k}},
\end{equation}
where we have  used twice the inequality \refa{ineq-Bernoulli} for $t\geq 2(k-1) r$ and $\la \leq
1/(2\,C_{k,k})$.

From  Huygens' principle for \refa{pert-V-0} with initial data of compact support it follows that
$v_0(t,r)=0$ for $t>r+R$, hence for every $r\geq 0$ and sufficiently large $t>\max[r+R,2(k-1)r]$
we have
\begin{equation}
\begin{split}
  \left|u(t,r)-\la v_1(t,r)\right| &\leq
  \left|u(t,r)-v_0(t,r)-\la v_1(t,r)\right| + |v_0(t,r)|
  =  \Ocal{\frac{\la^2}{t^k}}\,,
\end{split}
\end{equation}
and
\begin{equation}
\begin{split}
  \left|u(t,r)-\la\frac{c_1}{t^k}\right| &\leq
  \left|u(t,r)-\la v_1(t,r)\right| + \la \left|v_1(t,r)-\frac{c_1}{t^k}\right|\\
  &=  \Ocal{\frac{\la^2}{t^k}} + o\left(\frac{\la}{t^{k}}\right),
\end{split}
\end{equation}
where we have used the result of theorem \ref{Th:lin-v1}, eq. \refa{eq:v1(3)}. Therefore
\begin{equation} \label{prediction}
  u(t,r)\cong \frac{C}{t^k},\qquad C=\la c_1 + \Ocal{\la^2}.
\end{equation}
\end{proof}
This gives the precise quantitative information about the late-time tail of $u(t,r)$ and shows
that the estimate in theorem \ref{Th:V} is optimal (for $t\gg r$) (see Table~1 and Fig.~1 for the
numerical verification).

\begin{table}[h]
\centering
\begin{tabular}{|c c||c|c|c|c||c|}
\hline
 & &\multicolumn{2}{c|}{$\lambda V_0=10^{-1}$}&\multicolumn{2}{c|}{$\lambda V_0=10^{-3}$} & $V_0$ \\
\cline{3-6}
 & & Theory & Numerics & Theory & Numerics & \\
\hline \hline
{$k=3$} & Exponent & 3.0 & 2.9996 & 3.0 & 3.0000 & 0.3485 \\
       & Amplitude & $-3.5449 \times 10^{-1}$ & $-3.0429 \times 10^{-1}$ & $-3.5449
       \times 10^{-3}$ & $-3.5394 \times 10^{-3}$ & \\
\hline
{$k=4$} & Exponent & 4.0 & 4.00001 & 4.0 & 4.00000 & 0.2339 \\
       & Amplitude & $-7.0898 \times 10^{-1}$ & $-6.6885 \times 10^{-1}$ & $-7.0898
        \times 10^{-3}$ & $-7.0856 \times 10^{-3}$ & \\
\hline
{$k=5$} & Exponent & 5.0 & 5.00000 & 5.0 & 5.00000 & 0.1560 \\
        & Amplitude & $-1.4179 $ & $-1.3745$ & $-1.4179 \times 10^{-2}$ & $-1.4175 \times 10^{-2}$ & \\
\hline
\end{tabular}
\caption{Linear case with a potential term $\lambda V(r) = \lambda V_0 \frac {\tanh^{k+2}(r)}
{r^k}$.
 The values of  $V_0$ follow from the normalization condition (\ref{V-bound}).
   The results are obtained for the initial data
    of the form:
$f(r)=0$, $g(r)=4(r^2-1) \exp\left(-r^2\right)$, which corresponds to $h(z)=z^2
\exp\left(-z^2\right)$ (see \ref{lemma:sol:h}). The number at the  Theory-Amplitude entry gives
the value of $\lambda c_1$ with $c_1$ defined in (\ref{eq:c1}). Note that for $\lambda\,V_0 =
10^{-1}$ the values of $\lambda$ are actually greater than the
   convergence radius of perturbation series  obtained
  in theorem \ref{Th:V}, but still, the formula (\ref{prediction}) seems to work very well.
} \label{tab:pot}
\end{table}

\begin{figure}[h]
\centering
\includegraphics[width=0.7\textwidth]{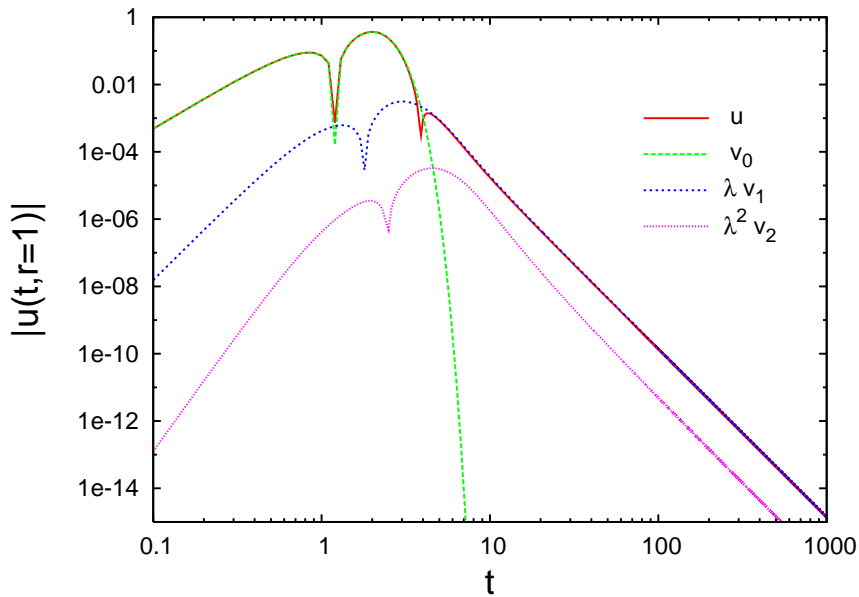}
\caption{\small{We plot (on log-log scale) the numerical solution $u(t,r=1)$ of equation
(\ref{V:wave-eq}) for $\lambda V(r)=0.1 \tanh^7{r}/r^5$ (this corresponds to $\lambda=0.64$). The
initial data are: $f(r)=0$, $g(r)=4(r^2-~1)\exp\left(-r^2\right)$, which corresponds to $h(z)=z^2
\exp\left(-z^2\right)$ (see \ref{lemma:sol:h}). The first three terms in the perturbation
expansion (\ref{l_sum}) are superimposed. In agreement with theorem~3, the tail is perfectly
approximated by $\lambda v_1$ (cf. Table~1).}}
\end{figure}

\section{Nonlinear case without a potential term}

Now, we consider the nonlinear wave equation of the form
\begin{equation} \label{Fu:wave-eq}
   \Box u = F(u)
\end{equation}
with initial data $(f,g)$ supported on the interval $r\in[0,R]$ and satisfying \refa{fg-bound}
with $f_0, f_1, g_0 < \eps$. The nonlinear term obeys $|F(u)|\leq F_1 |u|^p$ for $|u|<1$ and
$|F(u)-F(v)|\leq F_2 |u-v| \max(|u|,|v|)^{p-1}$. The second condition is satisfied e.g. for
$F(u)=u^p$ with $F_2 = p$ or for $F\in\C^1$ such that $|F'(u)|\leq F_2 |u|^{p-1}$ for $|u|<1$.

\subsection{Perturbation series}

In order  to construct a well-defined perturbation scheme to all orders we have to assume
additionally that $F(u)$ is analytic at $u=0$ and
its Taylor series starts at power $p\geq 3$
\begin{equation} \label{eq:Fu-Taylor-series}
F(u) = u^p \sum_{n=0}^\infty b_n u^n, \qquad b_0 \neq 0
\end{equation}
Then, for small initial data
\begin{equation} \label{Fu-pert-initdata}
  (u,\dot{u})(0) = (\eps f,\eps g)
\end{equation}
we introduce the perturbation series for  the solution of \refa{Fu:wave-eq}
\begin{equation} \label{Fu-pert-ser}
   u = \sum_{n=1}^\infty \eps^n v_n.
\end{equation}
Inserting this series into \refa{Fu:wave-eq} and collecting terms according to powers of $\eps$
we obtain the following perturbation scheme
\begin{alignat}{4} \label{Fu-perteq1}
  \Box v_1 &= 0,&\qquad (v_1,\dot{v}_1)(0)&=(f,g)&\quad&\ra&\quad
  v_1&=I_0(f,g) \\  \label{Fu-perteq}
  \Box v_{n+1} &= F_n(v_1,...,v_n),&\qquad (v_{n+1},\dot{v}_{n+1})(0)&=(0,0)&\quad&\ra&\quad
  v_{n+1} &= L_0(F_n(v_1,...,v_n)),
\end{alignat}
for $n\geq 1$, where $F_n$ result from collecting the nonlinear terms with the same powers of
$\eps$
\begin{equation} \label{Fu-Fn}
  F_n(v_1,...,v_n)  \sum_k 
  a^n_k v_1^{\alpha^{n,1}_k} \cdots v_n^{\alpha^{n,n}_k},
\end{equation}
where $\alpha^{n,m}_k\in\N$ satisfy $\sum_{m=1}^n m\alpha^{n,m}_k = n+1$ and $\sum_{m=1}^n
\alpha^{n,m}_k\geq p$ for every $n,k$. The coefficients $a_k^n$ are functions of $b_m$ only (see
\cite{NS-Tails} for the explicit formula).

We call this expansion the ``zero background'' case because the zero-order term $v_0$ is absent.
If a $v_0$ term was present in the series above (i.e. the summation started at $n=0$), we would
have an additional equation $\Box v_0 = F(v_0)$ which is genuinely  nonlinear (in contrast to the
above system of linear wave equations with source terms). Its solution $v_0$ represents a
``background'' around which the perturbations $v_n$ are calculated.

From part I, we have the following

\begin{Theorem} \label{Th:Fu-pert}
With $f,g$ and $F(u)$ as above for any $p\geq 3$ and sufficiently small $\eps$ the series defined
in \refa{Fu-pert-ser}-\refa{Fu-perteq} converges (in norm) in $\Linftx{p-1}$ to the solution of
equation \refa{Fu:wave-eq} with initial data \refa{Fu-pert-initdata}.
\end{Theorem}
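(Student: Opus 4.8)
The plan is to run the same Banach fixed-point argument that underlies Theorem~\ref{Th:V}, but now for the nonlinear iteration $u_{n+1} = I_0(f,g) + L_0(F(u_n))$ in the ball of radius $\sim\eps$ in $\Linftx{p-1}$. First I would recall from part~I the two estimates that make this work: the free-evolution bound $\|I_0(f,g)\|_{\Linftx{\infty}} \leq C\,(f_0+f_1+g_0)$ (by Huygens, with arbitrarily fast decay) and the Duhamel bound $\|L_0(N)\|_{\Linftx{q}} \leq C_{p,q}\,\|N\|_{\Linftau{2}{q+p-1}}$ or the analogous weighted estimate that was used to control backscattering; the key algebraic fact is that if $v\in\Linftx{p-1}$ then the source $F(v)$, being $\Ocal{|v|^p}$, lies in a space with weight $p(p-1) > (p-1)+2$ for $p\geq 3$ (indeed $p(p-1)\geq 6 > 4$), so $L_0(F(v))$ lands back in $\Linftx{p-1}$ with a gain. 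Concretely, set $X = \Linftx{p-1}$ and $\Phi(u) := I_0(f,g) + L_0(F(u))$; I would show that for $\eps$ small enough $\Phi$ maps the ball $B_{K\eps}\subset X$ into itself and is a contraction there.

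The self-mapping step: for $\|u\|_X \leq K\eps$ one has $\|u\|_{L^\infty}\leq K\eps < 1$, so $|F(u)|\leq F_1|u|^p$ pointwise, giving $\|F(u)\|_{\Linftx{p(p-1)}} \leq F_1 (K\eps)^p$; feeding this into the $L_0$ estimate and using $p(p-1) \geq (p-1)+p > (p-1)+2$ so that the Duhamel integral converges with the correct output weight, one gets $\|L_0(F(u))\|_X \leq C F_1 (K\eps)^p$. Adding $\|I_0(f,g)\|_X \leq C_1\eps$ (since $f_0,f_1,g_0<\eps$) yields $\|\Phi(u)\|_X \leq C_1\eps + C F_1 K^p \eps^p$, which is $\leq K\eps$ once $K \geq 2C_1$ and $\eps$ is small enough that $C F_1 K^p \eps^{p-1} \leq C_1$. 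The contraction step uses the second hypothesis on $F$: $|F(u)-F(v)|\leq F_2|u-v|\max(|u|,|v|)^{p-1}$, so for $u,v\in B_{K\eps}$, $\|F(u)-F(v)\|_{\Linftx{p(p-1)}} \leq F_2 (K\eps)^{p-1}\|u-v\|_X$, and the same $L_0$ estimate gives $\|\Phi(u)-\Phi(v)\|_X \leq C F_2 (K\eps)^{p-1}\|u-v\|_X \leq \tfrac12\|u-v\|_X$ for $\eps$ small. Banach's fixed point theorem then produces a unique fixed point $u\in B_{K\eps}$, and since the iteration $u_n$ defined in the paper coincides with $\Phi^{\,n}(0)$ (with the partial sums $u_n = \sum_{k=1}^{n+1}\eps^k v_k$ matching, by the same telescoping/linearity argument used just above Theorem~\ref{Th:V}), the $u_n$ converge in $X$ to $u$, which by construction solves $\Box u = F(u)$ with data $(\eps f,\eps g)$.

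I would close by noting that $\|v_n\|_X = \|u_{n-1}-u_{n-2}\|_X/\eps^{\,n}$ is controlled by the geometric series, so each $v_n\in\Linftx{p-1}$ (with $v_1\in\Linftx{\infty}$ by Huygens), parallel to the remark after Theorem~\ref{Th:V}; this is what later lets one extract the leading tail from $v_2$ alone. The main obstacle is purely bookkeeping rather than conceptual: one must check that the composite source $F(u)$ — equivalently each monomial $v_1^{\alpha^{n,1}}\cdots v_n^{\alpha^{n,n}}$ appearing in $F_n$ — is estimated in the right weighted space so that $L_0$ returns something in $\Linftx{p-1}$, which forces the condition $p\geq 3$ (so $p(p-1)>4$) rather than merely $p>1+\sqrt2$; the inequality $p>1+\sqrt2$ was only what the Strauss--Tsutaya bound needed, whereas closing the perturbation scheme to all orders in this fixed $\Linftx{p-1}$ norm needs the cleaner integer threshold. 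Verifying that the $L_0$ mapping estimate from part~I applies with source weight $p(p-1)$ and output weight $p-1$ — i.e. that $p(p-1) - (p-1) = (p-1)^2 > 2$ — is the one place to be careful, but for $p\geq 3$ this is immediate.
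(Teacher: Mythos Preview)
Your contraction-mapping argument for the Picard map $\Phi(u)=I_0(\eps f,\eps g)+L_0(F(u))$ is sound and does produce a solution $u\in\Linftx{p-1}$ for small $\eps$. The gap is in the last step, where you identify the Picard iterates $\Phi^n(0)$ with the partial sums $\sum_{k=1}^{n}\eps^k v_k$ of the perturbation series. That identification holds in the linear section \emph{precisely because} $Vu$ is linear in $u$ (this is what the sentence ``Due to linearity\dots'' before Theorem~\ref{Th:V} is doing); it fails here. Already $\Phi^2(0)-\Phi(0)=L_0(F(\eps v_1))=\sum_{m\geq 0} b_m\,\eps^{p+m} L_0(v_1^{p+m})$ contains contributions of all orders $\geq p$ in $\eps$, not just $\eps^p v_p$. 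So your formula $\|v_n\|_X=\|u_{n-1}-u_{n-2}\|_X/\eps^{n}$ is not valid, and convergence of the Picard iteration does not by itself yield convergence of the power series $\sum_n \eps^n v_n$, which is what the theorem asserts.

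There are two clean ways to close this. One is the route of part~I, which the paper simply cites here without reproducing a proof: estimate the $v_n$ directly by induction on the hierarchy \refa{Fu-perteq}, using that each $F_n$ is a finite sum of monomials $v_1^{\alpha_1}\cdots v_n^{\alpha_n}$ with $\sum_m m\alpha_m=n+1$ and $\sum_m\alpha_m\geq p$, together with your $L_0$ mapping estimate, to obtain $\|v_n\|_{\Linftx{p-1}}\leq M\rho^n$; this is exactly the bound invoked later in the proof of Theorem~\ref{Th:Fu-u}. The other is to keep your fixed-point argument but add one line: each Picard iterate $\Phi^n(0)$ is a polynomial in $\eps$ (since $F$ is analytic and $\Phi^0(0)=0$), and your contraction bound is uniform on a complex disk $|\eps|<\eps_0$; hence the limit is analytic in $\eps$ there, and its Taylor coefficients --- which by uniqueness coincide with the $v_n$ --- define a convergent series. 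Either addition completes the proof. (Minor slip: the leading tail is carried by $v_p$, not $v_2$, since $v_2=\cdots=v_{p-1}=0$.)
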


Since the introduction of the auxiliary parameter $\eps$ in the perturbation series expansion
serves only to generate a system of linear equations equivalent to the original nonlinear
equation, we can eventually remove the parameter $\eps$ and assume that the initial data are such
that $f_0, f_1, g_0$ are sufficiently small. Then, solving the system
\refa{Fu-perteq1}-\refa{Fu-perteq} and summing up the convergent series $\sum_{n=1}^\infty v_n u$ we obtain a solution of the nonlinear wave equation \refa{Fu:wave-eq}.

\subsection{Optimal decay estimate}

The perturbation scheme \refa{Fu-perteq1}-\refa{Fu-perteq} can be written as
\begin{align}
  v_1&=I_0(f,g) \\
  v_2&=v_3=...=v_{p-1}=0\\
  v_{p} &= L_0(F_{p-1}(v_1,...,v_{p-1})) = b_0 L_0((v_1)^p) \label{Fu-perteq-p}\\
  v_{n+1} &= L_0(F_n(v_1,...,v_n)),\qquad n\geq p.
\end{align}
We have $v_1=I_0(f,g)\in\Linftx{\infty}$ and $v_n\in\Linftx{p-1}$ for $n\geq 2$.
\begin{Theorem} \label{Th:Fu-vp}
Under the above assumptions, for $t\gg r+R$, we have
\begin{equation}
  v_p(t,r) \cong d_p\, t^{-(p-1)}\,,
\end{equation}
where the constant $d_p$ is given by (\ref{eq:dp}).
\end{Theorem}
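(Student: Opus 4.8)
The plan is to follow exactly the template of the proof of Theorem~\ref{Th:lin-v1}, replacing the role of the backscattering source $-Vv_0$ by the nonlinear source $b_0(v_1)^p$. First I would write $v_p = b_0 L_0((v_1)^p)$ via the Duhamel formula \refa{duh1}, so that
\begin{equation}
  v_p(t,r) = \frac{b_0}{4r} \int\limits_{|t-r|}^{t+r} d\xi \int\limits_{-\xi}^{t-r} d\eta\, \frac{\xi-\eta}{2}\, \big(\w{v}_1(\xi,\eta)\big)^p.
\end{equation}
Here $v_1 = I_0(f,g)$ is given in closed form by lemma~\ref{lemma:sol}, namely $v_1(\tau,\rho) = \big(h(\tau-\rho)-h(\tau+\rho)\big)/\rho$ with $h$ supported on $[-R,R]$. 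In null coordinates this means $\w{v}_1(\xi,\eta) = 2\big(h(\eta)-h(\xi)\big)/(\xi-\eta)$, which is supported (as a function of the pair) on the region where $\eta\in[-R,R]$ or $\xi\in[-R,R]$. Since we take $t\gg r+R$, the whole $\xi$-range $[t-r,t+r]$ lies beyond $R$, so $h(\xi)=0$ there and only the $\eta$-dependence survives: $\w{v}_1(\xi,\eta)^p = 2^p h(\eta)^p/(\xi-\eta)^p$, with $\eta$ effectively restricted to $[-R,R]$.

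Substituting and simplifying, the integral collapses to the same structure as in \refa{eq:v1(2)}:
\begin{equation}
  v_p(t,r) = \frac{2^{p-2} b_0}{r} \int\limits_{-R}^{+R} d\eta\, h(\eta)^p \int\limits_{t-r}^{t+r} d\xi\, (\xi-\eta)^{-(p-1)}.
\end{equation}
Now I would apply lemma~\ref{lemma:integral} (the same ``lemma~2'' used above) with the exponent $k$ there replaced by $p-1$; since $p\geq 3$ we have $p-1\geq 2$, so the hypothesis of the integral lemma is met. This yields, for $t\gg r+R$,
\begin{equation}
  v_p(t,r) = \frac{d_p}{t^{p-1}} + \Ocal{\frac{r+R}{t^{p}}}, \qquad
  d_p = 2^{p-1} b_0 \int\limits_{-R}^{+R} h(\eta)^p\, d\eta,
\end{equation}
which is exactly the asserted $v_p(t,r)\cong d_p t^{-(p-1)}$, with $d_p$ identified as the constant to be labelled \refa{eq:dp}.

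The steps here are almost entirely routine given the machinery already set up; the only genuine point requiring care is the claim that for $t\gg r+R$ the contribution of $h(\xi)$ to $\w{v}_1$ drops out of the $\xi$-integral, so that $(\w{v}_1)^p$ reduces to a clean product of an $\eta$-factor and a pure power $(\xi-\eta)^{-(p-1)}$ rather than a binomial expansion of $(h(\eta)-h(\xi))^p$. This is precisely the analogue of the simplification from \refa{eq:v1(1)} to the line above \refa{eq:v1(2)}, and it hinges on $\mathrm{supp}\,h\subset[-R,R]$ together with $t-r>R$. The secondary bookkeeping obstacle is to make sure the $\xi$ lower limit is $t-r$ (not $|t-r|$) and the $\eta$ upper limit $t-r$ does not cut into $[-R,R]$ — again automatic once $t>r+R$. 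No subleading-correction analysis of the potential is needed here since there is no potential; if desired, one could also record a sharper remainder under extra smoothness of $F$ at $0$, but the statement as given follows directly from the above.
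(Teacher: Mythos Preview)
Your approach is exactly the paper's: Duhamel in null coordinates, use $\mathrm{supp}\,h\subset[-R,R]$ to kill $h(\xi)$ on the $\xi$-range when $t>r+R$, reduce to the double integral of lemma~\ref{lemma:integral} with $\alpha=p-1$, and read off the leading coefficient. The only issue is an arithmetic slip in the prefactor: combining $\tfrac{1}{4r}\cdot\tfrac{1}{2}\cdot 2^p$ from the Duhamel kernel and $(\w v_1)^p=2^p h(\eta)^p/(\xi-\eta)^p$ gives $\tfrac{2^{p-3}b_0}{r}$, not $\tfrac{2^{p-2}b_0}{r}$; after lemma~\ref{lemma:integral} supplies the factor $2r$, this yields $d_p=2^{p-2}b_0\int_{-R}^{R}h(\eta)^p\,d\eta$, in agreement with \refa{eq:dp}, rather than your $2^{p-1}$.
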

\begin{proof}
In analogy with equations (\ref{eq:v0}-\ref{duh1}) we have from lemma \ref{lemma:sol}
\begin{equation}
\label{eq:v1-nl} v_1(t,r) = \frac{h(t-r)-h(t+r)}{r}
\end{equation}
and from \refa{Fu-perteq-p}
\begin{equation}
\label{eq:v2(1)-nl} v_p(t,r) = \frac {1} {8 r} \int\limits_{|t-r|}^{t+r} d\xi \, \int
\limits_{-\xi}^{t-r} d \eta \, (\xi-\eta) f_0 (v_1(\eta, \xi))^p.
\end{equation}
As before, interchanging the order of integration we get for $t>r+R$
\begin{equation}
\label{eq:v1(2)-nl} v_p(t,r) = \frac {2^{p-3} f_0} {r} \int \limits_{-R}^{+R} d \eta \,
(h(\eta))^p \int\limits_{t-r}^{t+r} d\xi \, (\xi-\eta)^{-p+1}.
\end{equation}
Using lemma \ref{lemma:integral} we get for $t\gg r+R$ and $p \geq 3$
\begin{equation}
\label{eq:v2(3)-nl} v_p(t,r) = \frac{d_p}{t^{p-1}} + \mathcal{O} \left( \frac{r+R}{t^{p}}
\right),
\end{equation}
where
\begin{equation}
\label{eq:dp} d_p = 2^{p-2} b_0 \int \limits_{-R}^{+R} d \eta \, (h(\eta))^p.
\end{equation}
\end{proof}

Now, we will show that $v_p$ dominates the perturbation series for large times and small $\eps$
and has the same decay rate as the full solution $u$ of the nonlinear wave equation (see Table~2
and Fig.~2 for the numerical verification).

\begin{Theorem} \label{Th:Fu-u}
Under the assumptions of theorem \ref{Th:Fu-vp}, for small $\eps$ and $t\gg r+R$, we have
\begin{equation}
  u(t,r)\cong \eps^p v_p(t,r) [1+\Ocal{\eps}]\,,
\end{equation}
hence
\begin{equation}
  u(t,r) \cong D t^{-p+1}, \qquad D=d_p \eps^p + \Ocal{\eps^{p+1}}\,.
\end{equation}
\end{Theorem}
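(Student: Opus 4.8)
The plan is to mimic the structure of the proof of Theorem~\ref{Th:lin-u}, replacing the linear backscattering estimate by the nonlinear convergence estimate from Theorem~\ref{Th:Fu-pert}. First I would reinstate the auxiliary parameter $\eps$ so that the solution is $u=\sum_{n\geq 1}\eps^n v_n$, and write $u_n:=\sum_{k=1}^n \eps^k v_k$ for the partial sums. From the proof of Theorem~\ref{Th:Fu-pert} (part~I) one extracts, for $\eps$ small enough, a geometric bound on the tail of the series in the $\Linftx{p-1}$ norm, of the form
\begin{equation}
  \|u-u_n\|_\Linftx{p-1} \leq C\, (A\eps)^{\,n+1-?}\,
\end{equation}
— more precisely, since $v_2=\dots=v_{p-1}=0$ and the first nontrivial correction is $\eps^p v_p$, the remainder after $u_p=\eps v_1+\eps^p v_p$ is controlled by $\Ocal{\eps^{p+1}}$ in the $\Linftx{p-1}$ norm, with the implied constant depending only on $f_0,f_1,g_0,p$ and the Taylor coefficients $b_n$. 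This gives a pointwise bound
\begin{equation}
  |u(t,r)-\eps v_1(t,r)-\eps^p v_p(t,r)| \leq \frac{C\,\eps^{p+1}}{\norm{t+r}\norm{t-r}^{p-2}} =: \Delta_p(t,r).
\end{equation}

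Next I would convert the space-time weighted bound into a clean $t^{-(p-1)}$ decay on any fixed-$r$ ray, exactly as in Theorem~\ref{Th:lin-u}: using the Bernoulli-type inequality \refa{ineq-Bernoulli}, for $t\geq 2(p-2)r$ one has $\norm{t-r}^{-(p-2)}\leq 2(1+t)^{-(p-2)}$, hence $\Delta_p(t,r)=\Ocal{\eps^{p+1}/t^{p-1}}$. Then I invoke Huygens' principle for the homogeneous equation \refa{Fu-perteq1}: since $(f,g)$ have compact support, $v_1(t,r)=0$ for $t>r+R$, so $\eps v_1$ drops out entirely in the regime $t>\max[r+R,\,2(p-2)r]$. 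Combining,
\begin{equation}
\begin{split}
  |u(t,r)-\eps^p v_p(t,r)| &\leq |u(t,r)-\eps v_1(t,r)-\eps^p v_p(t,r)| + \eps|v_1(t,r)| \\
  &= \Ocal{\eps^{p+1}/t^{p-1}},
\end{split}
\end{equation}
which is precisely $u(t,r)\cong \eps^p v_p(t,r)[1+\Ocal{\eps}]$ once we recall from Theorem~\ref{Th:Fu-vp} that $v_p(t,r)\cong d_p t^{-(p-1)}$ with $d_p\neq 0$ generically, so that the error is a genuine relative $\Ocal{\eps}$ correction and not merely an absolute one. Feeding in $v_p(t,r)=d_p t^{-(p-1)}+o(t^{-(p-1)})$ and absorbing the $\eps$-errors gives $u(t,r)\cong D t^{-(p-1)}$ with $D=d_p\eps^p+\Ocal{\eps^{p+1}}$.

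The main obstacle is the first step: extracting from the abstract convergence statement of Theorem~\ref{Th:Fu-pert} a \emph{quantitative} remainder bound with an explicitly geometric dependence on $\eps$, and in particular checking that the remainder after the $p$-th order term is $\Ocal{\eps^{p+1}}$ rather than merely $o(1)$. In the linear case this was transparent because the iteration was linear and the contraction constant $C_{k,k}\la$ was explicit; here $F_n$ is a multilinear polynomial in $v_1,\dots,v_n$ of total degree $\geq p$, so one must track how the nonlinear Duhamel estimates compound. I expect this to follow from the same fixed-point/contraction argument used to prove Theorem~\ref{Th:Fu-pert} in part~I — the map $u\mapsto \eps v_1 + L_0(F(u))$ is a contraction on a small ball in $\Linftx{p-1}$ with contraction constant $\Ocal{\eps^{p-1}}$, and $u_p$ is its $\lceil\text{a few}\rceil$-fold image starting from $\eps v_1$ — so that standard contraction-mapping error estimates give $\|u-u_p\|\leq \Ocal{\eps^{p-1}}\|u-u_{p-1}\|=\Ocal{\eps^{p-1}}\cdot\Ocal{\eps^p}=\Ocal{\eps^{p+1}}$, once one notes $u-u_{p-1}=u-\eps v_1=\Ocal{\eps^p}$. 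A secondary, purely bookkeeping point is that $\Linftx{p-1}$ controls $\norm{t-r}^{p-2}$ rather than $\norm{t-r}^{k-1}$, so the exponent in the Bernoulli step changes from $k-1$ to $p-2$; this is harmless but must be stated correctly.
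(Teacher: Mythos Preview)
Your proposal is correct and follows essentially the same route as the paper: split $u=\eps v_1+\eps^p v_p+\text{(remainder)}$, kill $v_1$ by Huygens for $t>r+R$, bound the remainder by $\Ocal{\eps^{p+1}}$ in $\Linftx{p-1}$, convert to a $t^{-(p-1)}$ decay, and finish via Theorem~\ref{Th:Fu-vp}.

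The only substantive difference is in how the $\Ocal{\eps^{p+1}}$ remainder is obtained. The paper does not go through the contraction map at all: it simply quotes from the convergence proof in part~I that $\|v_n\|_\Linftx{p-1}\leq M\rho^n$ for all $n\geq 1$, and then sums $\sum_{m\geq p+1}\eps^m\|v_m\|\leq M(\eps\rho)^{p+1}/(1-\eps\rho)$. This is shorter and avoids the subtlety in your contraction argument, namely that the Picard iterates of $\Phi(u)=\eps v_1+L_0(F(u))$ are \emph{not} literally the perturbation partial sums $u_n$ (the first Picard iterate already contains all powers of $\eps$ coming from the Taylor expansion of $F$). Your argument can be repaired---$\Phi(\eps v_1)$ agrees with $u_p$ up to $\Ocal{\eps^{p+1}}$, so the contraction bound still yields what you want---but the termwise estimate is the cleaner path. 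The paper also does not spell out the Bernoulli step; it simply asserts the bound for $t\gg r$, whereas you make the exponent shift from $k-1$ to $p-2$ explicit, which is a virtue.
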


\begin{proof}
We need to show that $\eps I_0(f,g)$ and $\eps^{n+1} L_0(F_n(v_1,...,v_n))$ for $n\geq p$ are
small relative to $\eps^p d_p(x) t^{-(p-1)}$. As before, for $v_1=I_0(f,g)\in\Linftx{\infty}$
Huygens' principle and compact support of the initial data imply that $v_1(t,r)=0$ for
sufficiently large $t$ (and fixed $r$).
From the convergence proof for the perturbation series we know that there exist constants
$M,\rho>0$ such that $\|v_{n}\|_\Linftx{p-1}\leq M \rho^n$ for all $n\geq 1$. Hence, for
sufficiently small $\eps<1/\rho$ we can estimate the remainder of the perturbation series
\begin{equation}
  \left\|\sum_{m=p+1}^\infty \eps^m v_m\right\|_\Linftx{p-1}
  \leq M \sum_{m=p+1}^\infty \eps^m \rho^m
  \leq \frac{M \eps^{p+1} \rho^{p+1}}{1-\eps\rho} \leq C \eps^{p+1}\,.
\end{equation}
This implies that for $t\gg r$
\begin{equation}
  \left|\sum_{m=p+1}^\infty \eps^m v_m(t,r)\right| \leq
  \frac{C \eps^{p+1}}{\norm{t+r}\norm{t-r}^{p-2}}
  = \Ocal{\frac{\eps^{p+1}}{t^{p-1}}}\,,
\end{equation}
so
\begin{equation}
  \left| u(t,r) -  \eps^p v_p(t,r)\right| \leq
  |\eps v_1(t,r)| + \left|\sum_{m=p+1}^\infty \eps^m v_m\right|
  = \Ocal{\frac{\eps^{p+1}}{t^{p-1}}}.
\end{equation}
From theorem \ref{Th:Fu-vp}, for $t\gg r+R$ we have $v_p= d_p t^{-(p-1)} + \Ocal{t^{-p}}$ , hence
\begin{equation}
  \left|u(t,r) - \frac{d_p \eps^p}{t^{p-1}}\right|
  = \Ocal{\frac{\eps^{p+1}}{t^{p-1}}} + \Ocal{\frac{\eps^p}{t^p}},
\end{equation}
which finally gives
\begin{equation}
  u(t,r) \cong D t^{-p+1}, \qquad D=d_p \eps^p + \Ocal{\eps^{p+1}}\,.
\end{equation}
\end{proof}

\begin{table}[h]
\centering
\begin{tabular}{|c c||c|c|c|c|}
\hline
 & &\multicolumn{2}{c|}{$\varepsilon=1$}&\multicolumn{2}{c|}{$\varepsilon=10^{-1}$} \\
\cline{3-6}
 & & Theory & Numerics & Theory & Numerics \\
\hline \hline
{$p=3$} & Exponent &  2.0 & 2.0009 & 2.0 & 2.0008 \\
       & Amplitude & $0.1421$ & $0.1265$ & $0.1421 \times 10^{-4}$ & $0.1427 \times 10^{-4}$ \\
\hline
{$p=4$} & Exponent & 3.0 & 3.0013 & 3.0 & 3.0012 \\
       & Amplitude & $9.0873 \times 10^{-2}$ & $8.4433 \times 10^{-2}$ & $9.0873 \times 10^{-6}$ & $9.1631 \times 10^{-6}$ \\
\hline
{$p=5$} & Exponent & 4.0 & 4.0015 & 4.0 & 4.0015 \\
        & Amplitude & $5.9925 \times 10^{-2}$ & $6.1192 \times 10^{-2}$ & $5.9925 \times 10^{-7}$ & $6.0597 \times 10^{-7}$ \\
\hline
\end{tabular}
\caption{Nonlinear case: $F(u) = u^p$, without a potential term. The initial data are the same as
for Table~1 and Fig.~1. The number at the  Theory-Amplitude entry gives the value
 of $\epsilon^p d_p$, with $d_p$ defined in (\ref{eq:dp}) (with $b_0=1$).}
\label{tab:nl}
\end{table}

\begin{figure}[h]
\centering
\includegraphics[width=0.7\textwidth]{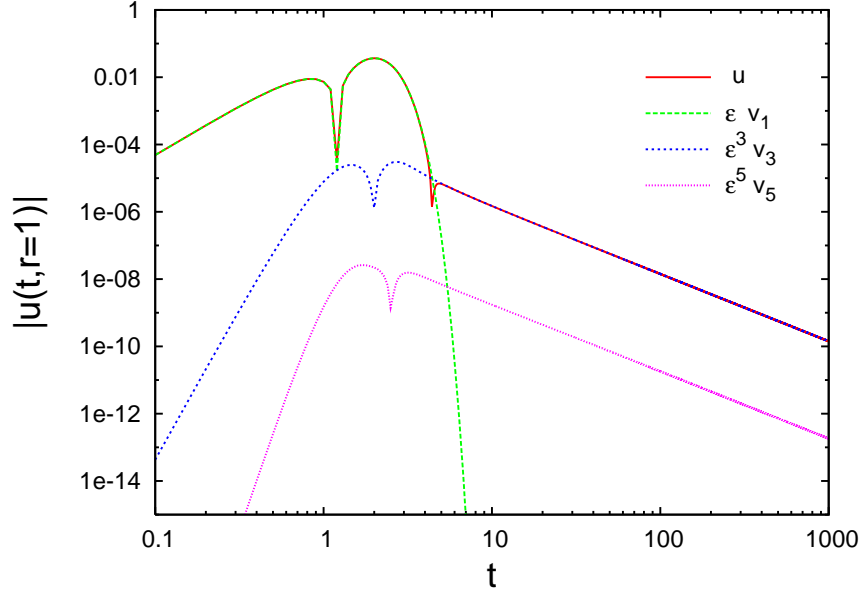}
\caption{\small{We plot (on log-log scale) the numerical solution $u(t,r=1)$ of equation
(\ref{Fu:wave-eq}) with $F(u)=u^3$. The initial data are the same as in Fig.~1  and $\varepsilon=0.1$.
The first three terms in the perturbation expansion (\ref{eq:Fu-Taylor-series}) are superimposed.
In agreement with theorem~5, the tail is perfectly approximated by $\varepsilon^3 v_3$ (cf.
Table~2).}}
\end{figure}

\section{Nonlinear case with a potential term}

Finally, let us consider the full nonlinear wave equation (\ref{wave-eq}) with a potential with
initial data $(f,g)$ supported on the interval $r\in[0,R]$ and satisfying \refa{fg-bound} with
$f_0, f_1, g_0 < \eps$. The nonlinear term $F(u)$ is the same as in the previous section.

\subsection{Perturbation series}

Defining the perturbation expansion for the nonlinear wave equation with a potential
\refa{wave-eq}
\begin{equation} \label{Fu-V:pert-ser}
   u = \sum_{n=1}^\infty \eps^n v_n\,,
\end{equation}
we encounter the problem of two scales which are given by parameters $\la$ (measuring the
strength of the potential) and $\eps$ (measuring the strength of the initial data). Since these
parameters play only an auxiliary role in generating the perturbation scheme, we make a
convenient choice and assign to $\la$ a scale of some power of $\eps$, say $\la=\w{\la} \eps^a$
with $a\in\N_+$.

Then, the power series (\ref{Fu-V:pert-ser}) inserted into the wave equation \refa{wave-eq} gives
\begin{align}
  v_{-n} &:= 0,\qquad n\geq 0 \label{V-Fu-2perteq-}\,, \\
  v_1 &:=I_0(f,g) \label{V-Fu-2perteq1}\,, \\
   v_{n+1} &:= -\w{\la} L_0(V v_{n+1-a}) + L_0(F_n(v_1,...,v_n)) \label{V-Fu-2perteq},\qquad n\geq 1.
\end{align}
In the following we choose $a:=p-1$ because then the lowest-order nontrivial term, $v_p$ (all
lower-order terms with $1<n<p$ vanish), contains contributions both from $V$ and $F$ and gives a
good approximation to $u$, as will be shown below.

In this case, from part I, we also have a convergence result
\begin{Theorem} \label{Th:V-Fu-pert2}
With $f,g, V$ and $F(u)$ as above for any $k>2$, $p\geq 3$, $\la<C_{q,k}^{-1}$ and sufficiently
small $\eps$ the series defined in \refa{Fu-V:pert-ser}-\refa{V-Fu-2perteq} converges (in norm)
in $\Linftx{q}$ for $q=\min(p-1,k)$ to the solution of the equation \refa{wave-eq} with initial
data \refa{Fu-pert-initdata}.
\end{Theorem}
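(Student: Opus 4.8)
The plan is to reduce Theorem~\ref{Th:V-Fu-pert2} to the two convergence results we already have in hand --- Theorem~\ref{Th:V} for the linear-with-potential case and Theorem~\ref{Th:Fu-pert} for the purely nonlinear case --- by setting up a fixed-point (contraction) argument for the iteration \refa{V-Fu-2perteq} in the single Banach space $\Linftx{q}$ with $q=\min(p-1,k)$. First I would record the two linear estimates that drive everything: the backscattering estimate for $L_0$ against the potential, namely $\|L_0(Vv)\|_\Linftx{q}\leq C_{q,k}\,\|V\|_\Linfx{k}\,\|v\|_\Linftx{q}$ (this is exactly the estimate from part~I underlying Theorem~\ref{Th:V}, and the reason the threshold $C_{q,k}^{-1}$ appears), and the estimate for $L_0$ against a nonlinear source, $\|L_0(|v|^p)\|_\Linftx{q}\lesssim \|v\|_\Linftx{q}^{\,p}$ for $v$ small in $\Linftx{q}$, together with its difference (Lipschitz) version coming from $|F(u)-F(v)|\leq F_2|u-v|\max(|u|,|v|)^{p-1}$. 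Note that here one needs $L_0$ to map $\Linftx{p-1}\cdot(\text{lower powers})$ into $\Linftx{q}$; since $q\leq p-1$ this is consistent, and the homogeneous piece $I_0(f,g)\in\Linftx{\infty}\subset\Linftx{q}$ by Huygens.

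The second step is to run the iteration. Define the nonlinear map $\Phi(v):=-\w\la\, L_0(Vv) + L_0(F(v)) + I_0(f,g)$ (the ``resummed'' version of \refa{V-Fu-2perteq}, with $\la=\w\la\eps^{p-1}$ absorbed appropriately). On a ball $B_\delta=\{\|v\|_\Linftx{q}\leq\delta\}$ one shows $\Phi$ maps $B_\delta$ into itself: the linear term contributes $\w\la\, C_{q,k}\,\delta$, which is $\leq \theta\delta$ with $\theta:=\la C_{q,k}<1$ by hypothesis, the nonlinear term contributes $O(\delta^p)=o(\delta)$ for $\delta$ small, and $\|I_0(f,g)\|_\Linftx{q}\leq C_{k+1}(f_0+f_1+g_0)$, which is $<(1-\theta)\delta/2$ once the data are small enough. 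The same bookkeeping, applied to $\Phi(v)-\Phi(w)$, gives the contraction constant $\theta + O(\delta^{p-1})<1$. Banach's fixed point theorem then yields a unique fixed point $u\in\Linftx{q}$, and a standard argument (the partial sums $u_n=\sum_{k\le n}\eps^k v_k$ of \refa{Fu-V:pert-ser} coincide with the Picard iterates $\Phi^{\circ n}(0)$ up to the $\eps$-reindexing, by matching powers of $\eps$ as in the derivation of \refa{V-Fu-2perteq}) shows that this fixed point is precisely the limit of the series and that the series converges in norm with geometric rate. Finally one checks the fixed point solves \refa{wave-eq} with data \refa{Fu-pert-initdata}: this follows because $w=L_0(N)$ is by construction the Duhamel solution of $\Box w=N$ with zero data, and $w=I_0(f,g)$ solves $\Box w=0$ with data $(f,g)$, so $\Box u = -\la Vu + F(u)$ with data $(\eps f,\eps g)$ --- the mild solution is the classical/weak solution by the regularity carried along from part~I.

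The main obstacle, and the only genuinely delicate point, is the interplay of the two weights inside the nonlinear estimate. In the purely nonlinear analysis the natural space is $\Linftx{p-1}$, and the estimate $\|L_0(v_1^{\alpha_1}\cdots v_n^{\alpha_n})\|_{\Linftx{p-1}}\lesssim \prod\|v_m\|_{\Linftx{p-1}}^{\alpha_m}$ uses crucially that the total power is $\geq p$, so that the source decays fast enough for the Duhamel integral to land in $\Linftx{p-1}$. When $k<p-1$ we must instead close everything in the weaker space $\Linftx{q}=\Linftx{k}$, and one has to verify that the nonlinear Duhamel estimate still holds with the weaker weight $q$ on both input and output --- i.e.\ that $L_0$ maps $\Linftx{q}$-functions raised to a total power $\geq p>q+1$ back into $\Linftx{q}$. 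This is where one invokes the decay lemma of \cite{NS-DecayLemma} in the monotone form: a source bounded by $\norm{t+r}^{-\sigma}\norm{t-r}^{-(\nu-\sigma)}$ with $\sigma\geq1$ and $\nu>1$ produces a solution in $\Linftx{\min(\nu-1,\,\sigma-1,\dots)}$, and for a product of $\geq p$ factors each in $\Linftx{q}$ the effective source exponents are comfortably large enough (total spatial decay $\geq pq > q+1$) that the output weight $q$ is recovered with room to spare. Once this single estimate is granted, the contraction argument is entirely mechanical and mirrors the proofs of Theorems~\ref{Th:V} and~\ref{Th:Fu-pert}; the convergence radius is then dictated by the smaller of the linear threshold $\la<C_{q,k}^{-1}$ and the smallness required of $\eps$, exactly as stated.
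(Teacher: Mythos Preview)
The paper does not give a self-contained proof of this theorem; it simply records the statement and defers to part~I \cite{NS-Tails}. So there is no in-paper argument to compare line by line. What can be inferred about part~I's method from the present paper is the estimate quoted in the proof of Theorem~\ref{Th:Fu-u}, namely $\|v_n\|_{\Linftx{p-1}}\leq M\rho^n$; this strongly suggests that part~I bounds each perturbation coefficient $v_n$ directly from the recursion \refa{V-Fu-2perteq} (using the product structure of the $F_n$ and the $L_0$-estimates) and then sums the geometric series. Your route is different: you bypass the individual $v_n$ entirely and run a Banach contraction for the resummed map $\Phi$. Both approaches are legitimate; yours is quicker if one only wants the limit, while part~I's direct bound on $\|v_n\|$ is what the paper actually uses downstream (in Theorems~\ref{Th:Fu-u} and~\ref{Th:FuV-u}) to control the tail $\sum_{m\geq p+1}\eps^m v_m$.

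Two points in your sketch need tightening. First, there is a slip between $\la$ and $\w\la$: you define $\Phi$ with $\w\la$ in front of the potential term and then invoke $\theta:=\la\,C_{q,k}<1$ to bound it. The map for the actual problem \refa{wave-eq}--\refa{Fu-pert-initdata} is $\Phi(v)=\eps\,I_0(f,g)-\la\,L_0(Vv)+L_0(F(v))$ with $\la$ (not $\w\la$), and then the linear contraction constant is indeed $\la\,C_{q,k}<1$; the $\w\la=\la\eps^{-(p-1)}$ only enters the bookkeeping of the $\eps$-expansion, not the fixed-point estimate. Second, the claim that the partial sums $u_n=\sum_{m\leq n}\eps^m v_m$ ``coincide with the Picard iterates $\Phi^{\circ n}(0)$ up to $\eps$-reindexing'' is not literally true---the $n$-th Picard iterate contains contributions of all $\eps$-orders---so it does not by itself yield norm convergence of the \emph{series} $\sum\eps^n v_n$. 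The clean way to finish your argument is: the contraction gives a fixed point $u=u(\eps)$; since $\Phi$ depends analytically on $(\eps,v)$ (polynomially in $\eps$ via the data term, and analytically in $v$ via $F$), the analytic implicit function theorem makes $u(\eps)$ analytic in $\eps$, and the uniqueness of Taylor coefficients forces its power series to be exactly \refa{Fu-V:pert-ser}, convergent in norm. Alternatively, to match what the paper needs later, you can extract the geometric bound $\|v_n\|_\Linftx{q}\leq M\rho^n$ directly from the recursion \refa{V-Fu-2perteq} using the same two $L_0$-estimates you already isolated.
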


\subsection{Optimal decay estimate}

For $a=p-1$ the system \refa{V-Fu-2perteq-}-\refa{V-Fu-2perteq} takes the form
\begin{align}
  v_{-n} &:= 0,\qquad n\geq 0\\
  v_1&=I_0(f,g) \label{Fu-V:pert-v1}\\
  v_2&=v_3=...=v_{p-1}=0\\
  v_{p} &= -\w{\la} L_0(V v_1) +  L_0(F_{p-1}(v_1,...,v_{p-1})) = -\w{\la} L_0(V v_1) +b_0 L_0((v_1)^p) \label{Fu-V:pert-vp}\\
  v_{n+1} &= -\w{\la} L_0(V v_{n-p+2}) +L_0(F_n(v_1,...,v_n)),\qquad n\geq p.
\end{align}
\begin{Theorem} \label{Th:FuV-vp}
Under the above assumptions, for $t\gg r+R$, we have
\begin{equation}\label{th8}
  v_p(t,r) \cong d_p\, t^{-q},\qquad q:=\min(k,p-1),
\end{equation}
where the constant $e_p$ is defined in (\ref{eq:ep}).
\end{Theorem}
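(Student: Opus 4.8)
The plan is to exploit the fact that, with the scaling choice $a=p-1$, the lowest nontrivial perturbation term $v_p$ is, by \refa{Fu-V:pert-vp}, exactly the sum of the backscattering term analysed in Section~2 and the nonlinear term analysed in Section~3,
\begin{equation}
  v_p(t,r) = -\w{\la}\, L_0(V v_1) + b_0\, L_0\big((v_1)^p\big)\,,
\end{equation}
where $v_1=I_0(f,g)$ is the same free, compactly supported solution that played the role of $v_0$ in \refa{eq:v0} and of $v_1$ in \refa{eq:v1-nl}, namely $v_1(t,r)=[h(t-r)-h(t+r)]/r$ with $h$ supported on $[-R,R]$. So the strategy is to estimate each of the two pieces separately by repeating, essentially verbatim, the computations in the proofs of theorems \ref{Th:lin-v1} and \ref{Th:Fu-vp}, and then to add the results.

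First I would treat the linear (backscattering) piece. Following eqs.~\refa{eq:v1(1)}--\refa{eq:v1(3)} with $v_0$ replaced by $v_1$ and carrying the overall factor $-\w{\la}$ --- interchanging the order of integration for $t>r+R$, writing $V(r)=r^{-k}[V_0+w(r)]$ with $w=o(1)$ at infinity, and applying lemma \ref{lemma:integral} for $t\gg r+R$ and $k>2$ --- I expect to obtain
\begin{equation}
  -\w{\la}\, L_0(V v_1)(t,r) = \w{\la}\, c_1\, t^{-k} + o\!\left(t^{-k}\right)\,,
\end{equation}
with $c_1$ as in \refa{eq:c1} (and the sharper remainder $\Ocal{(1+r+R)\,t^{-k-1}}$ in case $V=V_0 r^{-k}+\Ocal{r^{-k-1}}$). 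Next I would treat the nonlinear piece exactly as in eqs.~\refa{eq:v2(1)-nl}--\refa{eq:v2(3)-nl}, obtaining
\begin{equation}
  b_0\, L_0\big((v_1)^p\big)(t,r) = d_p\, t^{-(p-1)} + \Ocal{\frac{r+R}{t^{p}}}\,,
\end{equation}
with $d_p$ as in \refa{eq:dp}.

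Adding the two expansions gives, for $t\gg r+R$,
\begin{equation}
  v_p(t,r) = \w{\la}\, c_1\, t^{-k} + d_p\, t^{-(p-1)} + o\!\left(t^{-q}\right),\qquad q=\min(k,p-1)\,,
\end{equation}
from which \refa{th8} follows with $e_p=\w{\la}\,c_1$ when $k<p-1$, with $e_p=d_p$ when $k>p-1$, and with $e_p=\w{\la}\,c_1+d_p$ in the borderline case $k=p-1$ --- precisely the competition between linear and nonlinear tails announced in the abstract. In that borderline case the amplitude $e_p$ may vanish for a tuned $\w{\la}$, which is why \refa{th8} should be read in the weaker sense fixed in the Notation.

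I do not anticipate a serious obstacle here: the only points requiring care are (i) verifying that the subleading remainder of the slower-decaying term is still $o(t^{-q})$ --- it is, being $o(t^{-k})$ respectively $\Ocal{t^{-p}}$ --- so that nothing beyond the stated constant survives at order $t^{-q}$, and (ii) summing the two $t^{-q}$ contributions in the borderline case $k=p-1$. As in theorems \ref{Th:lin-v1} and \ref{Th:Fu-vp}, when $k\neq p-1$ one additionally records the sharper error $\Ocal{(1+r+R)\,t^{-q-1}}$ inherited from \refa{eq:v1(3)} and \refa{eq:v2(3)-nl}.
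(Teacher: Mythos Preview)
Your proposal is correct and follows essentially the same approach as the paper: decompose $v_p$ into the potential piece $-\w{\la}L_0(Vv_1)$ and the nonlinear piece $b_0L_0(v_1^p)$, invoke the asymptotics already obtained in theorems \ref{Th:lin-v1} and \ref{Th:Fu-vp} for each, and then combine according to whether $k\lessgtr p-1$. The paper simply cites those two theorems directly (renaming $\w{\la}c_1$ as $c_p$) rather than retracing their computations, but the logic and the resulting casewise definition of $e_p$ in \refa{eq:ep} are identical to yours.
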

\begin{proof}
$v_p$ defined in equation \refa{Fu-V:pert-vp} is a sum of two contributions, from the potential
and from the nonlinear term,
\begin{equation}
  v_p(t,r) \equiv v_p^{pot}(t,r) + v_p^{non}(t,r),
\end{equation}
where
\begin{equation}
  v_p^{pot}(t,r):=-\w{\la} L_0(V v_1),\qquad
  v_p^{non}(t,r):=b_0 L_0(v_1^p).
\end{equation}
From theorem \ref{Th:lin-v1} we have
\begin{equation}
  v_p^{pot}(t,r) = \frac{c_p}{t^k} + o\left(\frac{1}{t^k}\right)
\end{equation}
with
\begin{equation}
c_p = - 2^{k-1} \w{\la} V_0 \int \limits_{-R}^{+R} h(\eta)\,d\eta\,,
\end{equation}
and from theorem \ref{Th:Fu-vp} we have
\begin{equation}
v_p^{non}(t,r) = \frac{d_p}{t^{p-1}} + \mathcal{O} \left( \frac{r+R}{t^{p}} \right)
\end{equation}
with
\begin{equation}
d_p = 2^{p-2} b_0 \int \limits_{-R}^{+R} d \eta \, (h(\eta))^p.
\end{equation}
Depending on whether $k<p-1$ or $k>p-1$, the linear $v_p^{pot}(t,r)$ or the nonlinear
$v_p^{non}(t,r)$ contribution to the tail is dominant, respectively. In the special case $k=p-1$
we have
\begin{equation}
  v_p(t,r)=\frac{c_p+d_p}{t^{p-1}} + o\left(\frac{1}{t^{p-1}}\right).
\end{equation}
Thus, the constant in (\ref{th8}) is given by
\begin{equation}\label{eq:ep}
  e_p = \left\{\begin{array}{ll}
    c_p & \text{if }k<p-1\, \\ c_p+d_p & \text{if }k=p-1, \\ d_p & \text{if }k>p-1.
  \end{array} \right.
\end{equation}
\end{proof}

Now, we will show that $v_p$ dominates the perturbation series for large times and small $\eps$
and has the same decay rate as the full solution $u$ of the nonlinear wave equation with the
potential (see Fig.~3 for the numerical verification).

\begin{figure}[h]
\centering
\includegraphics[width=0.7\textwidth]{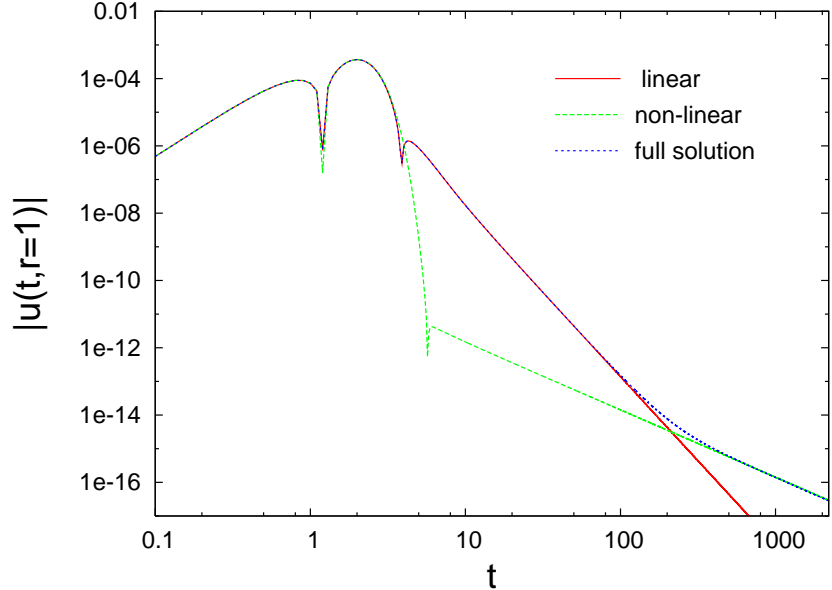}
\caption{\small{We plot (on log-log scale) the numerical solution $u(t,r=1)$ of equation
(\ref{wave-eq}) with $F(u)=u^3$. The potential $\la V $and the initial data $(\eps f,\eps g)$ are
the same as in Fig.~1 with $\la=0.64$ and $\eps=0.001$. Superimposed are solutions with the
nonlinearity or the potential switched off. The crossover from the linear tail$~\sim t^{-5}$ (for
intermediate times) to the final nonlinear tail$~\sim t^{-2}$ is clearly seen. }}
\end{figure}

\begin{Theorem} \label{Th:FuV-u}
Under the assumptions of theorem \ref{Th:FuV-vp}, for small $\eps$ and $t\gg r+R$, we have
\begin{equation}
  u(t,r)\cong \eps^p v_p(t,r) [1+\Ocal{\eps}]\,,
\end{equation}
hence
\begin{equation}
  u(t,r) \cong E t^{-q}, \qquad q:=\min(k,p-1),\qquad E=e_p \eps^p + \Ocal{\eps^{p+1}}\,.
\end{equation}

\end{Theorem}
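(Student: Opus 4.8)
The plan is to combine the convergence estimate of Theorem~\ref{Th:V-Fu-pert2} with the explicit second-order result of Theorem~\ref{Th:FuV-vp}, exactly parallel to the proofs of Theorems~\ref{Th:lin-u} and \ref{Th:Fu-u}. First I would recall that the perturbation series $u=\sum_{n\geq 1}\eps^n v_n$ converges in $\Linftx{q}$ with $q=\min(k,p-1)$, so from the convergence proof there are constants $M,\rho>0$ with $\|v_n\|_\Linftx{q}\leq M\rho^n$. This gives a geometric bound on the tail of the series: for $\eps<1/\rho$,
\begin{equation}
  \Big\|\sum_{m=p+1}^\infty \eps^m v_m\Big\|_\Linftx{q}
  \leq \frac{M\eps^{p+1}\rho^{p+1}}{1-\eps\rho} \leq C\eps^{p+1},
\end{equation}
hence pointwise, for $t\gg r$, $\big|\sum_{m\geq p+1}\eps^m v_m(t,r)\big| = \Ocal{\eps^{p+1}/t^{q}}$, using that finiteness of the $\Linftx{q}$ norm yields decay like $t^{-q}$ off the lightcone (via the Bernoulli inequality \refa{ineq-Bernoulli} as in the earlier proofs, valid for $t\geq 2(q-1)r$).

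Next I would dispose of the low-order terms. By Huygens' principle for \refa{Fu-V:pert-v1} with compactly supported data, $v_1(t,r)=I_0(f,g)(t,r)=0$ for $t>r+R$, so $\eps v_1$ contributes nothing for large $t$. The intermediate terms $v_2=\dots=v_{p-1}=0$ by construction. Therefore, for $t>\max[r+R,2(q-1)r]$,
\begin{equation}
  \big|u(t,r)-\eps^p v_p(t,r)\big| \leq |\eps v_1(t,r)| + \Big|\sum_{m=p+1}^\infty \eps^m v_m(t,r)\Big| = \Ocal{\frac{\eps^{p+1}}{t^{q}}}.
\end{equation}
Combining this with Theorem~\ref{Th:FuV-vp}, which gives $v_p(t,r) = e_p t^{-q} + o(t^{-q})$ for $t\gg r+R$ (with $e_p$ as in \refa{eq:ep}, assembled from $c_p$ and $d_p$ according to the sign of $k-(p-1)$), yields
\begin{equation}
  \Big|u(t,r) - \frac{e_p\eps^p}{t^{q}}\Big| \leq \big|u(t,r)-\eps^p v_p(t,r)\big| + \eps^p\Big|v_p(t,r)-\frac{e_p}{t^{q}}\Big| = \Ocal{\frac{\eps^{p+1}}{t^{q}}} + o\!\left(\frac{\eps^p}{t^{q}}\right),
\end{equation}
so $u(t,r)\cong E t^{-q}$ with $E = e_p\eps^p + \Ocal{\eps^{p+1}}$, and equivalently $u\cong\eps^p v_p[1+\Ocal{\eps}]$ since the error is a factor $\Ocal{\eps}$ smaller than the leading $\eps^p$ term.

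I do not expect any serious obstacle: the argument is a near-verbatim merge of the two earlier optimal-decay proofs, the only new ingredient being that the dominant second-order term $v_p$ now has the two-piece structure $v_p^{pot}+v_p^{non}$, which is already handled in Theorem~\ref{Th:FuV-vp}. The one point requiring a little care is the case $k=p-1$, where the two contributions have the same decay rate and one must track that $c_p+d_p$ is the correct amplitude (and note this sum could in principle vanish, in which case the $\cong$ notation is to be read in the $o(t^{-q})$ sense as stipulated in the Notation section); but since Theorem~\ref{Th:FuV-vp} already records $e_p$ in all three regimes, nothing further is needed. One should also state explicitly, as before, the range of validity $t>\max[r+R,2(q-1)r]$ and the smallness conditions $\eps<1/\rho$ and $\la<C_{q,k}^{-1}$ inherited from Theorem~\ref{Th:V-Fu-pert2}.
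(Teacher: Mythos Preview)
Your proposal is correct and follows essentially the same route as the paper's proof: invoke the convergence of the perturbation series in $\Linftx{q}$ from Theorem~\ref{Th:V-Fu-pert2} to bound the remainder $\sum_{m\geq p+1}\eps^m v_m$ by $\Ocal{\eps^{p+1}/t^q}$, use Huygens' principle to kill $v_1$, and then combine with the explicit asymptotics $v_p=e_p t^{-q}+o(t^{-q})$ from Theorem~\ref{Th:FuV-vp}. Your added remarks on the range of validity, the vanishing of $v_2,\dots,v_{p-1}$, and the borderline case $k=p-1$ are accurate and slightly more explicit than the paper's version.
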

\begin{proof}
We can repeat the reasoning from the proof of theorem \ref{Th:Fu-u} where we used the fact that
the perturbation series $\sum_{n=1} \eps^n v_n$ is convergent. Here, theorem \ref{Th:V-Fu-pert2}
guarantees convergence in $\Linftx{q}$ with $q:=\min(k,p-1)$. Analogously, we obtain for $t\gg r$
\begin{equation}
  \left|\sum_{m=p+1}^\infty \eps^m v_m(t,r)\right| \leq
  \frac{C \eps^{p+1}}{\norm{t+r}\norm{t-r}^{q-1}}
  = \Ocal{\frac{\eps^{p+1}}{t^{q}}}\,,
\end{equation}
so (again, $v_1(t,r)$ vanishes for $t\gg r$ by  Huygens' principle)
\begin{equation}
  \left| u(t,r) -  \eps^p v_p(t,r)\right| \leq
  |\eps v_1(t,r)| + \left|\sum_{m=p+1}^\infty \eps^m v_m\right|
  = \Ocal{\frac{\eps^{p+1}}{t^{q}}}.
\end{equation}
From theorem \ref{Th:FuV-vp} we have $v_p= e_p t^{-q} + o(t^{-q})$ for $t\gg r+R$, so we get
\begin{equation}
  \left|u(t,r) - \frac{e_p \eps^p}{t^{q}}\right|
  = \Ocal{\frac{\eps^{p+1}}{t^{q}}} + o\left(\frac{\eps^p}{t^q}\right),
\end{equation}
which gives
\begin{equation}
  u(t,r) \cong E t^{-q}, \qquad E=e_p \eps^p + \Ocal{\eps^{p+1}}.
\end{equation}
\end{proof}

\begin{acknowledgments}
PB acknowledges hospitality of the Albert-Einstein-Institute in Potsdam, where this work has been
started. NS acknowledges hospitality of the Institute of Physics, Jagellonian University in
Cracow, where this work has been finished. This research was supported in part by the MNII grant
1 P03B 012 29.
\end{acknowledgments}

\appendix*
\section{Lemmas}

\begin{Lemma} \label{lemma:sol}
The solution of the free wave equation
\begin{equation} \label{lemma:sol:waveeq}
  \Box u = 0
\end{equation}
with spherically symmetric initial data $u(0,r)=f(r)$, $\d_t u(0,r)=g(r)$ has the form
\begin{equation}
  u(t,r)=\frac{h(t-r)-h(t+r)}{r},
\end{equation}
where
\begin{equation} \label{lemma:sol:h}
  h(r) = -\frac{r}{2} f(r) + \frac{1}{2}\int_{r}^{\infty} r' g(r') dr',
\end{equation}
which is defined for all $r\in \R$ by the extension $f(-r):=f(r)$, $g(-r):=g(r)$. When $f$ and
$g$ have compact support then $h$ has also compact support on $\R$.
\end{Lemma}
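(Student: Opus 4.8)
The statement to prove is Lemma~\ref{lemma:sol}, the explicit solution formula for the spherically symmetric free wave equation. Let me sketch the proof.
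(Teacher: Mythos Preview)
Your proposal contains no proof at all --- it consists only of the sentence ``Let me sketch the proof.'' There is nothing to evaluate: no derivation, no verification, no argument for the compact-support claim. This is a complete gap.

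For comparison, the paper's proof proceeds as follows. One rewrites the radial wave equation in null coordinates $\xi=t+r$, $\eta=t-r$ as $\partial_\xi\partial_\eta(ru)=0$, whose general solution is $ru=h_-(t-r)+h_+(t+r)$. Regularity of $u$ at $r=0$ forces $h_+(t)=-h_-(t)$, so one sets $h:=h_-$. The initial conditions then give $f(r)=[h(-r)-h(r)]/r$ and $g(r)=[h'(-r)-h'(r)]/r$. Decomposing $h$ into its even part $S$ and odd part $A$ yields $A(r)=-rf(r)$ and $S'(r)=-rg(r)$; integrating the latter and fixing the integration constant as $S(0)=\int_0^\infty r'g(r')\,dr'$ produces the stated formula \eqref{lemma:sol:h}. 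Finally one checks that this particular choice of constant makes $h$ compactly supported whenever $f$ and $g$ are, using that $r'g(r')$ is odd under the even extension of $g$. Each of these steps --- the reduction to a one-dimensional problem, the regularity condition at the origin, the determination of $h$ from the data, the choice of integration constant, and the support verification --- must appear in your argument.
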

\begin{proof}
In spherical symmetry the wave equation \refa{lemma:sol:waveeq} can be written as
\begin{equation}
  \d_\xi \d_\eta (ru) = 0
\end{equation}
where $\xi = t + r$ and $\eta = t - r$. Its most general solution has the form
\begin{equation}
  ru(t,r) = h_-(\eta) + h_+(\xi) = h_-(t-r) + h_+(t+r).
\end{equation}
We require that $u(t,r)$ be finite at $r=0$ what implies
\begin{equation}
  0 = h_-(t) + h_+(t) \qquad \Rightarrow \qquad h(t):=h_{-}(t)=-h_{+}(t).
\end{equation}
From the initial conditions we get
\begin{align}
\label{lemma:sol:f}
  f(r) &= u(0,r) = \frac{h(-r)-h(r)}{r} \\
\label{lemma:sol:g}
  g(r) &= \d_t u(0,r) = \frac{h'(-r)-h'(r)}{r}.
\end{align}
We can write
\begin{equation}
h(r) = \frac {1} {2} [h(r)+h(-r)] + \frac {1} {2} [h(r)-h(-r)] \equiv \frac {1} {2} S(r) + \frac {1} {2} A(r),
\end{equation}
where we have introduced a symmetric function $S(r):=h(r)+h(-r)$ and an anti-symmetric function $A(r):=h(r)-h(-r)$. The solutions for $A(r)$ and $S'(r)=h'(r)-h'(-r)$ can be immediately read off from the initial conditions (\ref{lemma:sol:f}, \ref{lemma:sol:g}):
\begin{align}
\label{lemma:sol:A}
  A(r) &= -r f(r) \\
\label{lemma:sol:Sbis}
  S'(r) &= -r g(r).
\end{align}
We see that the extension of $f$ and $g$ on all $r\in\R$ defined by $f(-r):=f(r)$, $g(-r):=g(r)$ is the consistency condition for eqs. (\ref{lemma:sol:A}, \ref{lemma:sol:Sbis}). Integrating (\ref{lemma:sol:Sbis}) we get
\begin{equation}
S(r) - S(0) = - \int_0^{r} r' g(r') dr'.
\end{equation}
We use the freedom of choosing the integration constant and set
\begin{equation}
  S(0):=\int_{0}^{\infty} r' g(r') dr',
\end{equation}
what gives \refa{lemma:sol:h}. With this choice we obtain $h(r)$ compactly supported on $\R$ if
$f(r)$ and $g(r)$ are compactly supported. To see this, assume $f(x)=g(x)=0$ for $|x|>R$ and
consider $r>R$. The function $h(r)$ is obviously zero from \refa{lemma:sol:h}. For negative
arguments
\begin{equation}
  h(-r)= \frac{r}{2} \underbrace{f(-r)}_{=0} + \frac{1}{2}\int_{-r}^{\infty} r' g(r') dr'
  = \frac{1}{2}\int_{-R}^{R} r' g(r') dr' = 0\,,
\end{equation}
because the integrand $r' g(r')$ is an odd function. Thus, $\text{supp}\; h \subset [-R,+R]$.
\end{proof}

\begin{Lemma} \label{lemma:integral}
Let $\alpha>1$. Then
\begin{equation}
  \int \limits_{-R}^{+R} h(\eta)\,d\eta \int \limits_{t-r}^{t+r} \frac{d\xi}{(\xi - \eta)^{\alpha}}
  = \frac {2 r}{t^{\alpha}} \int_{-R}^{+R} h(\eta)\,d\eta
  + \Ocal{\frac{r(r+R)}{t^{\alpha+1}}}
\end{equation}
for $t> 2\alpha(r + R)$ and all $r\geq 0$.
\end{Lemma}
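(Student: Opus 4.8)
The plan is to evaluate the inner $\xi$-integral in closed form and then expand it for large $t$, being careful that the remainder stays uniform in $\eta\in[-R,R]$ so that it survives the outer integration against $h$. First I would dispose of the trivial case $r=0$, where both sides vanish, and assume $r>0$. Then I would substitute $\xi=t+u$ in the inner integral to write
\begin{equation}
  \int_{t-r}^{t+r}\frac{d\xi}{(\xi-\eta)^{\alpha}}
  =\int_{-r}^{r}\frac{du}{(t+u-\eta)^{\alpha}}
  =\frac{1}{t^{\alpha}}\int_{-r}^{r}\Bigl(1+\frac{u-\eta}{t}\Bigr)^{-\alpha}\,du .
\end{equation}

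The next step is the key estimate. Under the hypothesis $t>2\alpha(r+R)$ and using $\alpha>1$, for every $|u|\leq r$ and $|\eta|\leq R$ the expansion parameter satisfies $|u-\eta|/t\leq (r+R)/t<1/(2\alpha)\leq 1/2$, so it stays in a fixed compact subinterval of $(-1,\infty)$. On $[-\tfrac12,\tfrac12]$ the map $x\mapsto(1+x)^{-\alpha}$ has a derivative bounded by a constant $C_\alpha$ depending only on $\alpha$, hence $|(1+x)^{-\alpha}-1|\leq C_\alpha|x|$ uniformly. Plugging this in and integrating over $u\in[-r,r]$ (whose length is $2r$) gives, with an implied constant independent of $\eta$,
\begin{equation}
  \int_{t-r}^{t+r}\frac{d\xi}{(\xi-\eta)^{\alpha}}
  =\frac{2r}{t^{\alpha}}+\Ocal{\frac{r(r+R)}{t^{\alpha+1}}} .
\end{equation}
Finally I would multiply by $h(\eta)$ and integrate over $[-R,R]$; since $h$ is integrable there and the remainder above is uniform in $\eta$, the error integrates to $\Ocal{r(r+R)/t^{\alpha+1}}$ (the implied constant now also carrying the factors $C_\alpha$ and $\int_{-R}^{+R}|h|$), which yields exactly the claimed identity.

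I do not expect a genuine obstacle: the entire content is the uniform Lipschitz bound on $(1+x)^{-\alpha}$ together with the observation that $t>2\alpha(r+R)$ is precisely the condition keeping the expansion variable below $\tfrac12$. The only mildly delicate point is the bookkeeping of the error --- one must check that \emph{every} correction fits inside $\Ocal{r(r+R)/t^{\alpha+1}}$. If one instead wants the sharper subleading term used elsewhere in the paper, one expands one order further, $(1+x)^{-\alpha}=1-\alpha x+\Ocal{x^{2}}$: the linear term contributes $2\alpha r\eta/t^{\alpha+1}$, which integrates against $h$ to something of size $\Ocal{rR/t^{\alpha+1}}\leq\Ocal{r(r+R)/t^{\alpha+1}}$, while the quadratic remainder is $\Ocal{r(r+R)^{2}/t^{\alpha+2}}$, again $\leq\Ocal{r(r+R)/t^{\alpha+1}}$ since $(r+R)/t<1$.
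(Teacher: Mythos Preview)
Your proof is correct and follows essentially the same route as the paper: substitute to pull out $t^{-\alpha}$, write the inner integral as $2r$ plus a remainder $\delta(t,r,\eta)$, and bound $\delta$ uniformly in $\eta$ by controlling $|(1+x)^{-\alpha}-1|$ for $|x|\leq (r+R)/t<1/(2\alpha)$. The only cosmetic difference is that the paper bounds this quantity via a small case split and Bernoulli's inequality (obtaining the explicit constant $4\alpha$), whereas you use the Lipschitz/mean-value bound on $(1+x)^{-\alpha}$ over $[-\tfrac12,\tfrac12]$; both yield the same $\Ocal{r(r+R)/t^{\alpha+1}}$ uniformly in $\eta$, and the outer integration against $h$ proceeds identically.
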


\begin{proof}
Consider first the inner integral for $\eta\in[-R,R]$
\begin{equation}
\begin{split}
  I(t,r,\eta):=&\int_{t-r}^{t+r} \frac{d\xi}{(\xi - \eta)^{\alpha}}
  = \int_{-r}^{+r} \frac{dy}{(t-\eta+y)^{\alpha}}
  = \frac{1}{t^\alpha} \int_{-r}^{+r} \frac{dy}{\left(1+\frac{y-\eta}{t}\right)^{\alpha}} \\
  =&\frac {2 r}{t^{\alpha}}  + \frac{1}{t^{\alpha}}
    \int_{-r}^{+r} \left[\frac{1}{\left(1+\frac{y-\eta}{t}\right)^{\alpha}} - 1\right]\,dy
  \equiv \frac {2 r}{t^{\alpha}}  + \frac{1}{t^{\alpha}} \delta(t,r,\eta).
\end{split}
\end{equation}
We have
\begin{equation}
  |\delta(t,r,\eta)| \leq
  \int_{-r}^{+r} \left|\frac{1}{\left(1+\frac{y-\eta}{t}\right)^{\alpha}} - 1\right|\,dy
\end{equation}
and the integrand $J$ can be estimated by
\begin{equation} \label{lemma:integral:J}
  J:=\left|\frac{1}{\left(1+\frac{y-\eta}{t}\right)^{\alpha}} - 1\right| \leq
  \frac{1}{\left(1-\frac{r+R}{t}\right)^{\alpha}} - 1,
\end{equation}
what can be shown as follows. Having in mind that $-(r+R)\leq y+\eta\leq r+R$ we find for
$y-\eta<0$
\begin{equation}
  J = \frac{1}{\left(1+\frac{y-\eta}{t}\right)^{\alpha}} - 1 \leq
  \frac{1}{\left(1-\frac{r+R}{t}\right)^{\alpha}} - 1 \equiv J_1
\end{equation}
and for $y-\eta\geq 0$
\begin{equation}
  J =  1- \frac{1}{\left(1+\frac{y-\eta}{t}\right)^{\alpha}} \leq
  1-\frac{1}{\left(1+\frac{r+R}{t}\right)^{\alpha}} \equiv J_2.
\end{equation}
By simple algebra one can easily show that $J_2\leq J_1$ what gives \refa{lemma:integral:J}.
Further, by a version of Bernoulli's inequality,
\begin{equation}
  J\leq \frac{1}{1-\alpha \frac{r+R}{t}} - 1   \frac{\alpha\frac{r+R}{t}}{1-\alpha\frac{r+R}{t}}.
\end{equation}
Then
\begin{equation}
  |\delta(t,r,\eta)| \leq \frac{\alpha\frac{r+R}{t}}{1-\alpha\frac{r+R}{t}} 2r
  \leq 4 \alpha\frac{r(r+R)}{t}
\end{equation}
for $t\geq 2\alpha(r+R)$. Finally,
\begin{equation}
\begin{split}
  \int_{-R}^{+R} h(\eta)\,d\eta \int_{t-r}^{t+r} \frac{d\xi}{(\xi - \eta)^{\alpha}}
  &= \int_{-R}^{+R} h(\eta)\,d\eta \left[
    \frac {2 r}{t^{\alpha}}  + \frac{\delta(t,r,\eta)}{t^{\alpha}} \right]\\
  &= \frac {2 r}{t^{\alpha}} \int_{-R}^{+R} h(\eta)\,d\eta
  + \Ocal{\frac{r(r+R)}{t^{\alpha+1}}}\,.
\end{split}
\end{equation}
\end{proof}



\end{document}